\documentclass{article}
\usepackage[ruled, vlined]{algorithm2e}
\usepackage{amsmath, amssymb, amsthm}
\usepackage{xcolor}
\usepackage{bbm}
\usepackage{tcolorbox}
\usepackage{tikz}
\usepackage{float}

\providecommand{\keywords}[1]
{
  \small	
  \textbf{\textit{Keywords---}} #1
}

\providecommand{\declarations}[1]
{
  \small	
  \textbf{\textit{Declarations---}} #1
}

\providecommand{\coi}[1]
{
  \small	
  \textbf{\textit{Conflict of Interest---}} #1
}

\textheight         9in
\textwidth          6.5in
\headheight         -.5in
\oddsidemargin      -.0in

\begin{document}

\newtheorem{thm}{Theorem}
\newtheorem{rslt}{Result}

\begin{center}
\begin{large}
{\bf A Novel Metric for  Detecting Anomalous Ship Behavior Using a Variation of the DBSCAN Clustering Algorithm}
\end{large} 

\vspace{.5cm}

Carsten H. Botts

Carsten.Botts@jhuapl.edu

The Johns Hopkins University Applied Physics Lab

Laurel, MD

\end{center}

\abstract{There is a growing need to quickly and accurately identify anomalous behavior in ships.  This paper applies a variation of the  Density Based Spatial Clustering Among Noise (DBSCAN)  algorithm to identify such anomalous behavior given a ship's Automatic Identification System (AIS) data. This variation of the DBSCAN algorithm has been previously introduced in the literature, and in this study, we elucidate and explore the mathematical details of this algorithm, we  introduce a novel anomaly metric which is more statistically informative than the one previously suggested, and we study the asymptotic properties of this metric.}

\vspace{.5cm}

\keywords{clustering, anomaly detection, trajectory mining, maritime surveillance}

\vspace{.5cm}

\declarations{Funding: Not applicable. Conflicts of Interest/Competing Interest: None. Availability of data and material: data used is available on {\tt MarineCadastre.gov}. Code availability: Not applicable.}

\vspace{.5cm}

\coi{On behalf of all authors, the corresponding author states that there is no conflict of interest.} 

\section{Introduction}

In this paper we identify anomalous behavior in ships given their Automatic Identification System (AIS) data.   AIS data is reported by all ships, and among other things, these data include a ship's position (latitude \& longitude), speed, and course (direction) over time. We use this data to  identify spatial and behavioral patterns of ships. With these patterns identified from the training data, we  can potentially spot anomalous behavior in other ships given their new/incoming AIS data.   We do this by  applying a variation to the DBSCAN (Density Based Spatial Clustering Among Noise) algorithm (\cite{Ester}). The DBSCAN algorithm is  used to create spatial clusters of data (see
 \cite{Erman}, \cite{Schlitt}, \cite{Yu}), and because of its simplicity and speed, it is becoming an increasingly popular method used to identify anomalous behavior in ships (see \cite{Arguedas}, \cite{Pallotta2013a}, \cite{Pallotta2013b}, \cite{Riveiro}).   The variation to the DBSCAN algorithm we consider was  introduced by Liu et al. (\cite{Liu1}). In this paper, we review, correct, and expand on  the anomaly detection methods given in Liu (\cite{Liu1} \& \cite{Liu2}), and we also offer an alternative and more statistically informative way of identifying anomalous behavior.

This paper begins by showing the data and briefly discussing the problem at hand. This is done in Section \ref{sctn:Data}. Simply stated, the problem is this: ``How does one identify anomalous behavior among the ships that are reporting AIS data?" An easy way to answer such a question would be to create spatial clusters of AIS positional data using the DBSCAN algorithm.  If a new ship then reported AIS positional data which was geographically far from any of these clusters, such data may be considered unusual, or anomalous. Methods such as these have been applied in a variety of maritime settings. Lee et al. (\cite{Lee}) applied it to ships in port, and Wang et al. (\cite{Wang}) applied it to ships in waterways. There is so much data to create these positional/spatial clusters that Chen et al. (\cite{Chen}) studied how to form such clusters on the Hadoop platform.

 But what if one wished to detect anomalies with respect to other variables, such as speed or direction, in addition to the positional/spatial variables? For example, how would it be possible to identify if a ship were traveling in an unusual direction and/or at an unusual speed, yet was (geographically) very close to a cluster of the training data?  Capabilities have been created to separately address how to detect anomalies in speed (see \cite{Soares} and \cite{Wen}) or anomalies in direction (see \cite{Wen}),  but many of these methods are only suited to identify specific (predefined) departures from normal behavior (see \cite{Petry}).  Spatial clustering algorithms that separately include speed or direction have even been used to identify resting or stationary points (often called ``stops") of ships.  The clustering algorithm given in Palma (\cite{Palma}), for example, identifies ``stops" as places where ships are close together and have low speed, and the clustering algorithm given in Rocha (\cite{Rocha}) identifies ``stops" as places where ships are close together and have multiple directions (if ships are at rest they can have multiple directions and not endanger one another).  Liu (\cite{Liu1} \& \cite{Liu2}) and Kontopoulos et al. (\cite{Kontopoulos}) jointly cluster on location, speed and direction in an attempt to identify anomalous behavior with respect to any and/or all of these variables.  Kontopoulos et al. (\cite{Kontopoulos}) identify anomalous behavior visually; they form convex hulls with the clusters they create, and any ship that has a trajectory which goes beyond the borders of these hulls is said to exhibit anomalous behavior.   Liu et al. (\cite{Liu2}) attempt to identify anomalous behavior by reporting a single number. The statistical significance of this number, however, is not evident. It is thus possible for Liu's statistic to report ``anomalous" behavior yet (given the training data)  not be uncommon.  This is the problem we correct for. In this paper, we offer an alternative metric which clearly conveys the statistical significance of the anomaly score, and we prove that the asymptotic distribution of this metric is normal. An additional feature to the asymptotic distribution of our proposed metric is that it is invariant to the distributions of location, speed, or direction in the training data.
 
 To explain how we create this metric, we first review how Liu et al. (\cite{Liu1}) modify the DBSCAN algorithm to create appropriate clusters. This is done in Section \ref{sctn:dbscan}. Section \ref{sctn:dbscan}  also states and clarifies some mathematical assumptions that were made in their  methodology.   Section \ref{sctn:anomalyDetection} then explains how these clusters can be used to test for anomalous behavior.  Section \ref{sctn:anomalyDetection} specifically discusses (and occasionally corrects for) how Liu (\cite{Liu2}) use these clusters to create gravity vectors and stationary sampling points. These vectors and points are summaries of the training data that are necessary in calculating an anomaly score of new/incoming trajectories.   The statistical properties of Liu's anomaly score, and those of the new anomaly score that we propose are also discussed in Section \ref{sctn:anomalyDetection}.

\section{The Data}{\label{sctn:Data}}

In this paper, we look at AIS data that was reported on January 1 of 2017 along the mid-Atlantic coast of the United States.  This data is publicly available at {\tt MarineCadastre.gov}. These data are shown in Figure \ref{fig:allData}, and this is the data set we will use for training purposes.  We will denote the training data as ${\mathcal P}$ and assume there are $n$ observations in this data set, making   $${\mathcal P} = \left \{ {\bf z}_i^{\mathcal P}: i = 1, 2, \ldots, n \right \},$$  where $${\bf z}_i^{\mathcal P} = \left( y^{\mathcal P}_i, x^{\mathcal P}_i, s^{\mathcal P}_{i}, c_{i}^{\mathcal P} \right),$$ $y^{\mathcal P}_i$ is the latitudinal position of the $i^{\rm th}$ observation, $x^{\mathcal P}_i$ is the longitudinal position of the $i^{\rm th}$ observation, $s_i^{\mathcal P}$ is the ``s"peed of the $i^{\rm th}$ observation, and $c_i^{\mathcal P}$ is the ``c"ourse of the $i^{\rm th}$ observation.\footnote{We denote this training data set as ${\mathcal P}$ since it is from this data set that we will discover ``P"atterns. We choose ${\mathcal P}$ rather than ${\mathcal T}$  (for ``T"raining) since the superscript of the elements within the set is meant to indicate what set the element is a part of.    Having a superscript of ${\mathcal T}$ may confuse the reader in thinking that a transpose is being taken.} Figure \ref{fig:allData}, of course, only shows the latitude and longitude of the training data.

{
\begin{figure}[h] 
\begin{center}
{\includegraphics[trim = 7cm 7cm 7cm 2cm, scale = 0.25]{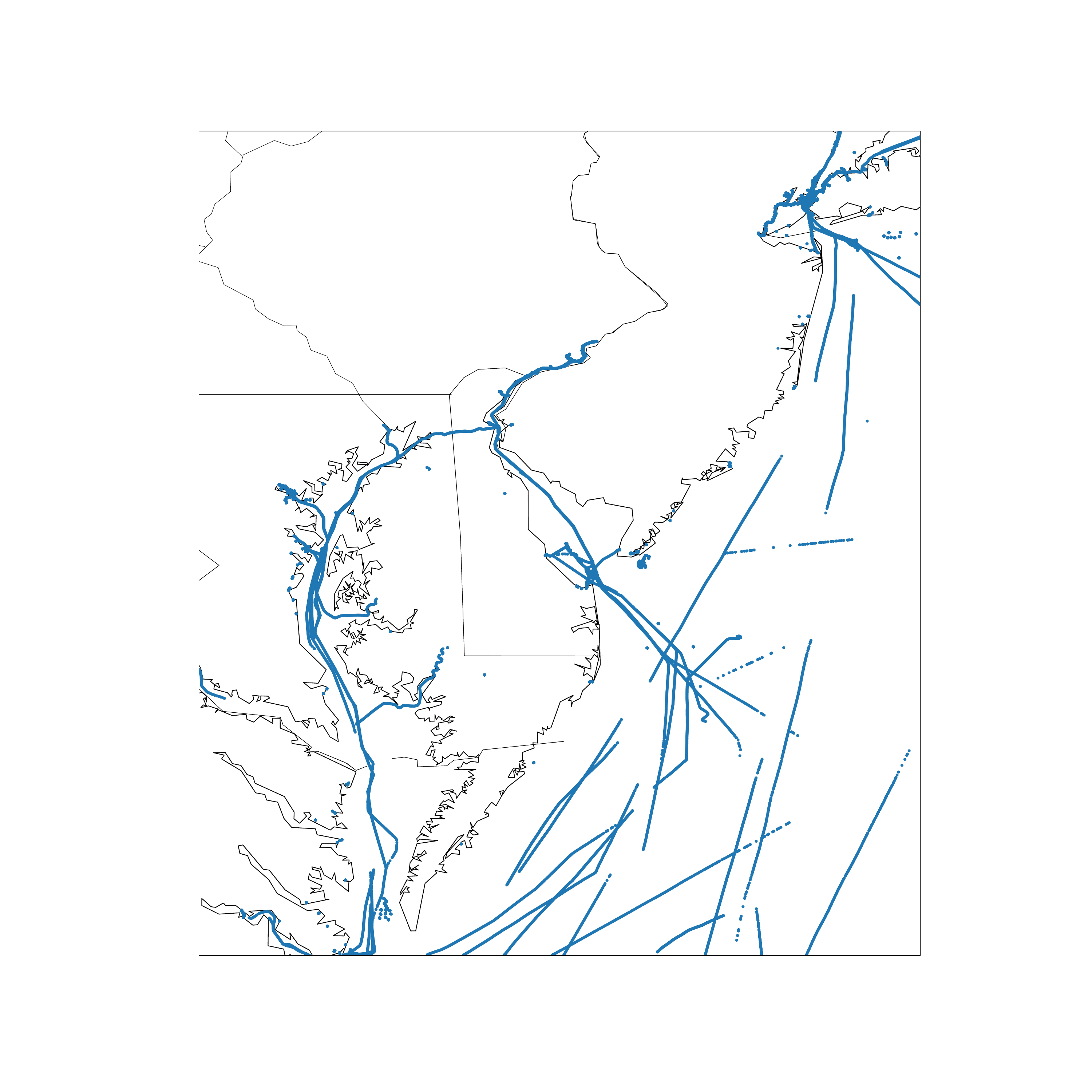}}
\caption{The Training Data, ${\mathcal P}$.}
\label{fig:allData}
\end{center}
\end{figure}
}

The goal of this paper is to identify patterns among this training data using as few assumptions as possible. Any other data which exhibits significant departures from these patterns will be considered an anomaly.

   A way to identify spatial patterns in the data while making no distributional assumptions is to apply the DBSCAN algorithm.   Although this is not the algorithm we ultimately use to create our clusters, we do use a close variation of it.   Section \ref{sctn:dbscan} reviews the DBSCAN algorithm and then provides details on how Liu et al. (\cite{Liu2}) altered it to accommodate our needs. Recall that our needs involve clustering the data not just  with respect to position, but with respect to position, speed, and course.

\section{Clustering the Data}{\label{sctn:dbscan}}

 To cluster spatial data using the DBSCAN algorithm, two parameters need to be specified, $\epsilon$ and $N_{\rm min}$. The parameter $\epsilon$ specifies the maximum distance each observation in a cluster is from another observation in that same cluster, and $N_{\rm min}$ specifies the minimum number of objects in a cluster.    The following definitions (borrowed directly from \cite{Ester} yet applied to ${\mathcal P}$) will be used to define the DBSCAN algorithm with more mathematical clarity.

\begin{description}

\item[Definition 1] A point  $ {\bf z}_i^{\mathcal P}$ is {\it directly density-reachable} from an object $ {\bf z}_j^{\mathcal P}$ with respect to $\epsilon$ and $N_{\rm min}$ in the set of objects ${\mathcal P}$ if 

\begin{enumerate}

\item $ {\bf z}_i^{\mathcal P} \in N_{\epsilon}  \left( {\bf z}_j^{\mathcal P} \right)$, where $N_{\epsilon} \left(  {\bf z}_j^{\mathcal P} \right)$ is the subset of ${\mathcal P}$ contained in the $\epsilon$ neighborhood of $ {\bf z}_j^{\mathcal P}$.

\item $| N_{\epsilon} \left(  {\bf z}_j^{\mathcal P} \right) | \geq N_{\rm min},$ where $|N_{\epsilon} \left(  {\bf z}_j^{\mathcal P} \right)|$ is the cardinality of the set $N_{\epsilon}\left(  {\bf z}_j^{\mathcal P} \right)$.

\end{enumerate}

To visualize the concept of two points being directly density-reachable, imagine the point $ {\bf z}_j^{\mathcal P}$ and all of the objects in the set ${\mathcal P}$ that are within $\epsilon$ from $ {\bf z}_j^{\mathcal P}$.  If there are at least $N_{\rm min}$ objects that are within $\epsilon$ of $ {\bf z}_j^{\mathcal P}$, and ${\bf z}_i^{\mathcal P}$ is one of those elements, then ${\bf z}_i^{\mathcal P}$ is directly density-reachable from the object $ {\bf z}_j^{\mathcal P}.$  Figure 2  illustrates objects that are directly density-reachable.

\begin{figure}[H]
{\centering

\begin{tikzpicture}

\draw (22,2) circle (2cm);
\draw[color = white, fill=blue] (22,2) circle [radius = .065];
\node [below right] at (22,2) { \textcolor{blue}{ $ {\bf z}_j^{\mathcal P}$} }; 
\draw[fill] (18,2) circle [radius = .05];
\draw[fill] (19.2,2.2) circle [radius = .05];
\draw[fill] (20.9,2.5) circle [radius = .05];
\draw[fill] (23.9,2.5) circle [radius = .05];
\draw[fill] (24.9,2.9) circle [radius = .05];
\draw[fill] (23.9,2.1) circle [radius = .05];
\draw[fill] (20.8,1.5) circle [radius = .05];
\draw[fill] (18.1,.5) circle [radius = .05];
\draw[fill] (21.9,1) circle [radius = .05];
\draw[fill] (23.1,3.71) circle [radius = .05];
\draw[fill] (24.8,0.8) circle [radius = .05];
\draw[fill] (19.1,1.0577) circle [radius = .05];
\draw[color = white, fill = red] (22.15,3.5) circle [radius = .065];
\draw[fill] (22.15,.35) circle [radius = .05];
\draw[fill] (21.95,.5) circle [radius = .05];
\draw [->] (22,2) -- (22.65, 3.85);
\node [below right] at(22.3, 2.9) { $\epsilon$};
\node[below left] at (22.15, 3.5) { \textcolor{red}{${\bf z}_i^{\mathcal P}$ }};

\end{tikzpicture}
\caption{ {\color{red} ${\bf z}_i^{\mathcal P}$} is  Directly Density-Reachable from {\color{blue} ${\bf z}_j^{\mathcal P}$} with respect to $\epsilon$ and $N_{\rm min} = 5$.}}
\label{fig:directlyDensityReachable}
\end{figure}
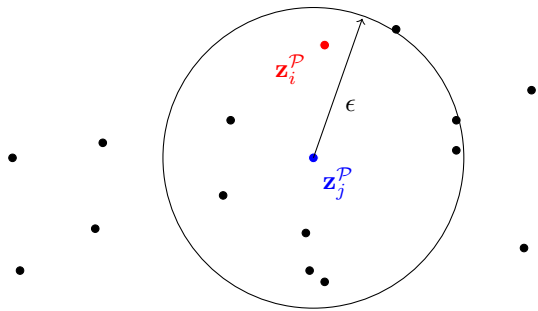

\item[Definition 2] An object ${\bf z}_i^{\mathcal P}$ is {\it density-reachable} from an object ${\bf z}_j^{\mathcal P}$ with respect to $\epsilon$ and $N_{\rm min}$ in the set of objects ${\mathcal P}$ if there is a chain of points $p_1, p_2, \ldots, p_n,$ with $p_1 = {\bf z}_j^{\mathcal P}$ and $p_n = {\bf z}_i^{\mathcal P}$ such that $p_i \in {\mathcal P}~\forall~i$ and $p_{i+1}$ is directly density-reachable from $p_i$ with respect to $\epsilon$ and $N_{\rm min}$.

Density-reachability is different from two points being directly density-reachable in the sense that it implies that there is a sequence of directly density-reachable points from ${\bf z}_j^{\mathcal P}$, and ${\bf z}_i^{\mathcal P}$ is directly density reachable from one of the points in that sequence. Figure 3  illustrates objects that are density-reachable from one-another.

\begin{figure}[H]
{\centering

\begin{tikzpicture}

\draw (22,2) circle (2cm);
\draw[color = red, fill = red] (22,2) circle [radius = .05];
\node [below right] at (22,2) { \textcolor{red}{ $ {\bf z}_j^{\mathcal P}$ }}; 
\draw [color = red][->] (22,2) -- (20.95, 3.7);
\node[below] at (21.8, 3) {\textcolor{red}{ $\epsilon$}};
\draw (23,1.9) circle (2cm);
\draw[color = red, fill = red] (23,1.9) circle [radius = .05];
\node [below] at (23, 1.9) { \textcolor{red}{ $p_2$ }};
\draw [color = red][->] (23,1.9) -- (24.8, 1.1);
\draw (24.2,2.59) circle (2cm);
\draw[color = red, fill = red] (24,2.59) circle [radius = .05];
\node [right] at (24.2, 2.59) { \textcolor{red}{ $p_3$ }};
\draw [color = red][->] (24,2.59) -- (24.4, 4.55);
\draw (30,2) circle (2cm);
\draw[color = red, fill = red] (30,2) circle [radius = .05];
\node [below right] at (30, 2) { \textcolor{red}{ $p_{7}$ }};
\draw [color = red][->] (30,2) -- (28.9, 3.65);
\draw (31.22,2.2) circle (2cm);
\draw[color = red, fill = red] (31.22,2.22) circle [radius = .05];
\node [below right] at (31.22, 2.22) { \textcolor{red}{ $p_{8}$ }};
\draw [color = red][->] (31.22,2.22) -- (33.2, 2);
\draw[fill] (23.2,3.6) circle [radius = .05];
\draw[fill] (24.4,.02) circle [radius = .05];
\draw[color = red, fill = red] (25.1,3.16) circle [radius = .05];
\node [below right] at (25.1, 3.16) { \textcolor{red}{ $p_{4}$ }};
\draw [color = red][->] (25.1,3.16) -- (27.1, 2.8);
\draw (25.1, 3.16) circle (2cm);
\draw[fill] (23.2,.01) circle [radius = .05];
\draw[fill] (23.3,2.78) circle [radius = .05];
\draw[fill] (31.6,3.7) circle [radius = .05];
\draw[fill] (28.01,1.69) circle [radius = .05];
\draw[fill] (23.8,2.54) circle [radius = .05];
\draw[fill = red, color = red] (28.9,1.53) circle [radius = .05];
\node [below] at (28.9, 1.53) { \textcolor{red}{ $p_{6}$ }};
\draw (28.9, 1.53) circle (2cm);
\draw [color = red][->] (28.9,1.53) -- (27, 1);
\draw[fill] (27.3,0.54) circle [radius = .05];
\draw[fill] (30.5,3.75) circle [radius = .05];
\draw[fill] (22.96,3.92) circle [radius = .05];
\draw[fill] (27,3.54) circle [radius = .05];
\draw[fill=red, color = red] (26.9,3.01) circle [radius = .05];
\node [above left] at (26.9, 3.01) { \textcolor{red}{ $p_{5}$ }};
\draw[fill] (26.6,2.7) circle [radius = .05];
\draw (26.9, 3.01) circle (2cm);
\draw [color = red][->] (26.9,3.01) -- (28.6, 4);

\draw[fill] (21.06,3.12) circle [radius = .05];
\draw[fill] (29.46,2) circle [radius = .05];
\draw[fill] (24.26,1.01) circle [radius = .05];
\draw[fill] (21.26,3.84) circle [radius = .05];
\draw[fill] (30.64,2.22) circle [radius = .05];
\draw[fill] (26.29,1.28) circle [radius = .05];
\draw[fill] (30.92,3.44) circle [radius = .05];
\draw[fill] (21.16,1.27) circle [radius = .05];
\draw[fill] (29.49,0.39) circle [radius = .05];
\draw[fill] (26.58,3.53) circle [radius = .05];
\draw[fill] (25.63,0.84) circle [radius = .05];
\draw[fill] (31.89,2.46) circle [radius = .05];
\draw[fill] (30.912,2.19) circle [radius = .05];
\draw[fill] (23.94,2.45) circle [radius = .05];
\draw[fill] (23.74,2.05) circle [radius = .05];
\draw[fill] (31.01,.35) circle [radius = .05];
\draw[fill] (23.69,1.77) circle [radius = .05];
\draw[fill] (29.89,1.77) circle [radius = .05];
\draw[fill] (27.44,0.34) circle [radius = .05];
\draw[fill] (22.69,0.74) circle [radius = .05];

\draw[fill] (30.26,2.55) circle [radius = .05];
\draw[fill] (31.41,3.22) circle [radius = .05];
\draw[fill] (29.68,1.03) circle [radius = .05];
\draw[fill] (29.30,0.78) circle [radius = .05];
\draw[fill] (22.12,0.35) circle [radius = .05];


\draw[color = blue, fill = blue] (32.15,3.5) circle [radius = .05];
\node[below left] at (33.15, 3.5) { \textcolor{blue}{ ${\bf z}_i^{\mathcal P}$ }};

\end{tikzpicture}
\caption{{\color{blue} $ {\bf z}_i^{\mathcal P}  $} is Density-Reachable from {\color{red} ${\bf z}_j^{\mathcal P}$} with respect to $\epsilon$ and $N_{\rm min} = 5$. In this case, $n = 9$, and observe that {\color{red} $p_{i+1}$} is directly density-reachable from {\color{red} $p_i$} for all $i$.}}
\label{fig:densityReachable}
\end{figure}
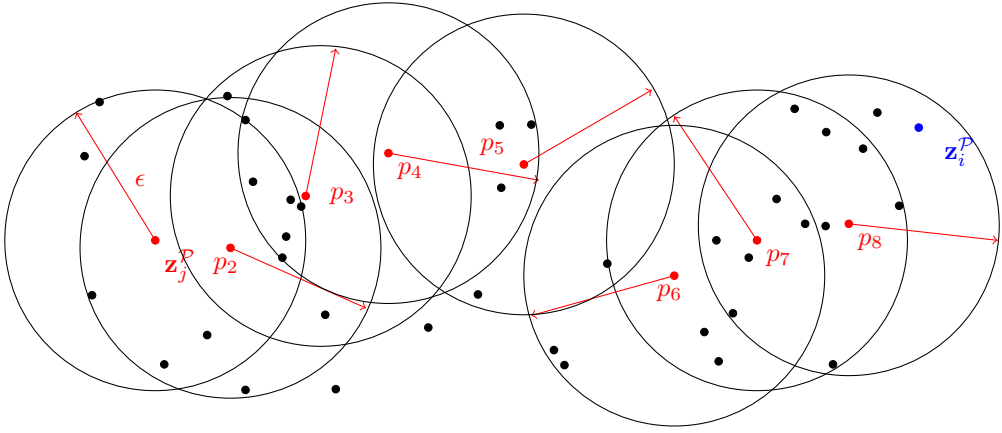

\item[Definition 3] An object ${\bf z}_i^{\mathcal P}$  is {\it density-connected} to an object ${\bf z}_j^{\mathcal P}$ with respect to $\epsilon$ and $N_{\rm min}$ in the set of objects ${\mathcal P}$ if there is  a point ${\bf z}_o^{\mathcal P} \in {\mathcal P}$ such that both ${\bf z}_i^{\mathcal P}$ and ${\bf z}_j^{\mathcal P}$ are density-reachable from ${\mathcal P}$ with respect to $\epsilon$ and $N_{\rm min}$ in ${\mathcal P}$.

Figure 4  illustrates the concept of two elements in ${\mathcal P}$ being density-connected.

\begin{figure}[H]
{\centering

\begin{tikzpicture}

\draw (22,2) circle (2cm);
\draw[color = red, fill = red] (22,2) circle [radius = .05];
\draw [color = red][->] (22,2) -- (20.95, 3.7);
\node[below] at (21.8, 3) {\textcolor{red}{ $\epsilon$}};
\draw (23,1.9) circle (2cm);
\draw[color = red, fill = red] (23,1.9) circle [radius = .05];
\draw [color = red][->] (23,1.9) -- (24.8, 1.1);
\draw (24.2,2.59) circle (2cm);
\draw[color = red, fill = red] (24,2.59) circle [radius = .05];
\node [below right] at (24, 2.59) { \textcolor{red}{ ${\bf z}_o^{\mathcal P}$ }};
\draw [color = red][->] (24,2.59) -- (24.4, 4.55);
\draw (30,2) circle (2cm);
\draw[color = red, fill = red] (30,2) circle [radius = .05];
\draw [color = red][->] (30,2) -- (28.9, 3.65);
\draw (31.22,2.2) circle (2cm);
\draw[color = red, fill = red] (31.22,2.22) circle [radius = .05];
\draw [color = red][->] (31.22,2.22) -- (33.2, 2);
\draw[fill] (23.2,3.6) circle [radius = .05];
\draw[fill] (24.4,.02) circle [radius = .05];
\draw[color = red, fill = red] (25.1,3.16) circle [radius = .05];
\draw [color = red][->] (25.1,3.16) -- (27.1, 2.8);
\draw (25.1, 3.16) circle (2cm);
\draw[fill] (23.2,.01) circle [radius = .05];
\draw[fill] (23.3,2.78) circle [radius = .05];
\draw[fill] (31.6,3.7) circle [radius = .05];
\draw[fill] (28.01,1.69) circle [radius = .05];
\draw[fill] (23.8,2.54) circle [radius = .05];
\draw[fill = red, color = red] (28.9,1.53) circle [radius = .05];
\draw (28.9, 1.53) circle (2cm);
\draw [color = red][->] (28.9,1.53) -- (27, 1);
\draw[fill] (27.3,0.54) circle [radius = .05];
\draw[fill] (30.5,3.75) circle [radius = .05];
\draw[fill] (22.96,3.92) circle [radius = .05];
\draw[fill] (27,3.54) circle [radius = .05];
\draw[fill=red, color = red] (26.9,3.01) circle [radius = .05];
\draw[fill] (26.6,2.7) circle [radius = .05];
\draw (26.9, 3.01) circle (2cm);
\draw [color = red][->] (26.9,3.01) -- (28.6, 4);

\draw[fill] (21.06,3.12) circle [radius = .05];
\draw[fill] (29.46,2) circle [radius = .05];
\draw[fill] (24.26,1.01) circle [radius = .05];
\draw[fill] (21.26,3.84) circle [radius = .05];
\draw[fill] (30.64,2.22) circle [radius = .05];
\draw[fill] (26.29,1.28) circle [radius = .05];
\draw[fill] (30.92,3.44) circle [radius = .05];
\draw[fill] (21.16,1.27) circle [radius = .05];
\draw[fill] (29.49,0.39) circle [radius = .05];
\draw[fill] (26.58,3.53) circle [radius = .05];
\draw[fill] (25.63,0.84) circle [radius = .05];
\draw[fill] (31.89,2.46) circle [radius = .05];
\draw[fill] (30.912,2.19) circle [radius = .05];
\draw[fill] (23.94,2.45) circle [radius = .05];
\draw[fill] (23.74,2.05) circle [radius = .05];
\draw[fill] (31.01,.35) circle [radius = .05];
\draw[fill] (23.69,1.77) circle [radius = .05];
\draw[fill] (29.89,1.77) circle [radius = .05];
\draw[fill] (27.44,0.34) circle [radius = .05];
\draw[fill] (22.69,0.74) circle [radius = .05];

\draw[fill] (30.26,2.55) circle [radius = .05];
\draw[fill] (31.41,3.22) circle [radius = .05];
\draw[fill] (29.68,1.03) circle [radius = .05];
\draw[fill] (29.30,0.78) circle [radius = .05];
\draw[color = blue, fill = blue] (22.12,0.35) circle [radius = .05];

\node [above right] at (22.12,.35) { \textcolor{blue}{ $ {\bf z}_j^{\mathcal P}$ }};

\draw[color = blue, fill = blue] (32.15,3.5) circle [radius = .05];
\node[below left] at (33.15, 3.5) { \textcolor{blue}{ ${\bf z}_i^{\mathcal P}$ }};

\end{tikzpicture}
\caption{{\color{blue} $ {\bf z}_j^{\mathcal P}  $} and {\color{blue} ${\bf z}_i^{\mathcal P}$} are  Density-Connected with respect to $\epsilon$ and $N_{\rm min} = 5$. Observe that both are Density-Reachable from {\color{red} ${\bf z}_0^{\mathcal P}$}}}
\label{fig:Hello}
\end{figure}
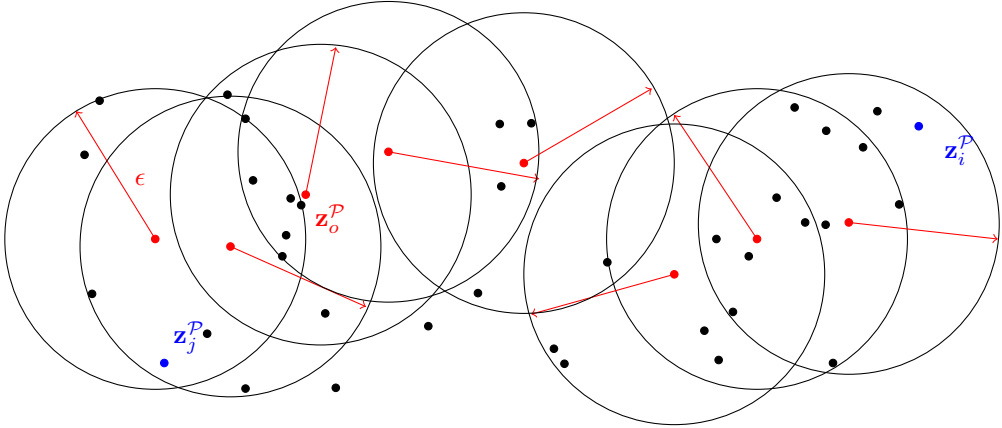

\end{description}

In the DBSCAN algorithm, all elements in the same cluster are density-connected. Another (and perhaps more intuitive) way to communicate this is to say that for every element in a DBSCAN cluster, there is at least one other element in that same cluster which is $\epsilon$ units away. And it must be the case that for at least one of the elements that is $\epsilon$ units away, there are at least $N_{\rm min}$ observations within $\epsilon$ of it.

Applying the DBSCAN algorithm to the training data shown in Figure \ref{fig:allData} with $\epsilon = .02$,  $N_{\rm min} = 5$, and ${\rm dist} \left( {\bf z}_i^{\mathcal P}, {\bf z}_j^{\mathcal P} \right)$ being $${\rm dist} \left( {\bf z}_i^{\mathcal P}, {\bf z}_j^{\mathcal P} \right) =    \left \|   \left( y_{i}^{\mathcal P},~x_{i}^{\mathcal P}  \right)^T - \left(   y_{j}^{\mathcal P},~x_{j}^{\mathcal P}  \right)^T     \right \|,$$ we get the results shown in Figure \ref{fig:DBSCANResults}. Note that the distance we calculate between two points is the Euclidian distance between the two points' latitude and longitude coordinates.   Our methodology thus assumes a flat earth. The more locally we apply our algorithm, the more valid this assumption.

{
\begin{figure}[H] 
\begin{center}
{\includegraphics[trim = 7cm 7cm 7cm 2cm, scale = 0.25]{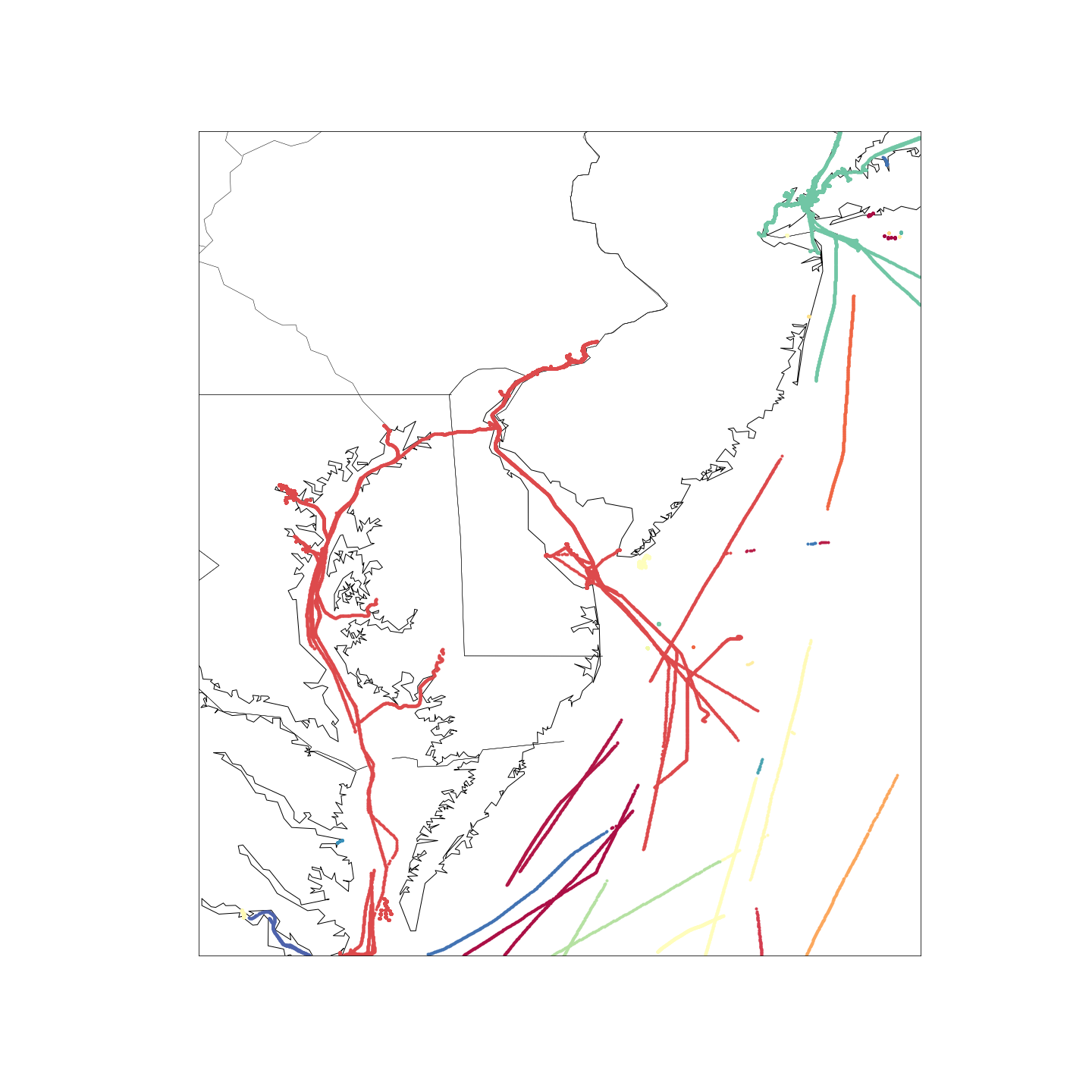}}
\caption{DBSCAN Results. The different clusters are in different colors.}
\label{fig:DBSCANResults}
\end{center}
\end{figure}
}

The results in Figure \ref{fig:DBSCANResults} cluster points based on their location. We wish to cluster with respect to location, speed, and direction, however.  Let us specifically assume that we wish to cluster such that points ${\bf z}_i^{\mathcal P}$ and ${\bf z}_j^{\mathcal P}$ are in the same cluster if ${\rm dist} \left( {\bf z}_i^{\mathcal P}, {\bf z}_j^{\mathcal P} \right) < \epsilon_{\rm Dist}$, $ \left| c_i^{\mathcal P} - c_j^{\mathcal P} \right| < \epsilon_{\rm Crs}$, and $ \left| s_i^{\mathcal P} - s_j^{\mathcal P} \right| < \epsilon_{\rm Spd}. $   To do this, an  edited version of the DBSCAN algorithm is necessary since DBSCAN is only engineered to consider one distance between two points.    Liu et al. (\cite{Liu1}) edit the DBSCAN algorithm to accommodate these wishes. Their edited version of the DBSCAN algorithm is called DBSCANSD, where the ``SD" stands for ``S"peed and ``D"irection.

The DBSCANSD  algorithm applied to the training data set ${\mathcal P}$ is given below and requires that the additional thresholds $\epsilon_{\rm Crs}$ and $\epsilon_{\rm Spd}$ be specified.   Liu et al.'s (\cite{Liu1}) addition to the standard DBSCAN algorithm is shown in red.

\begin{algorithm}

	\SetKwInOut{Input}{input}
	\SetKwInOut{Output}{output}
	\SetKwFunction{QueryNeighborPoints}{QueryNeighborPoints}
	\SetKwFunction{DBSCANSD}{DBSCANSD}
	\SetKwFunction{add}{add}
	\SetKwFunction{MergeClusters}{MergeClusters}
	\SetKwFunction{remove}{remove}
	\SetKwFunction{distance}{distance}
	
\hspace{-.4cm} {{\bf Procedure}: \DBSCANSD}
	
	\BlankLine

\Input{${\mathcal P}$,  $N_{\rm min}$, $\epsilon_{\rm Dist}$, $\epsilon_{\rm Crs}$, $\epsilon_{\rm Spd}$ }
	\Output{$cltrList$}

\BlankLine

	$cltrList \leftarrow$ empty list

	\For{\rm each unclassified point ${\bf z}_i^{\mathcal P} \in {\mathcal P} $}{
		Mark $ {\bf z}i^{\mathcal P}$ as classified

		$neighborPts \leftarrow$ \QueryNeighborPoints($ {\mathcal P}$, $ {\bf z}_i^{\mathcal P}$, $N_{\rm min}$, $\epsilon_{\rm Dist}$, $\epsilon_{\rm Crs}$, $\epsilon_{\rm Spd}$)

		\If{\rm $neighborPts$ is not NULL} {
			$cltrList$.\add(neighborPts)	}

		\For {\rm each cluster $C$ in $cltrList$} { 

			\For {\rm each cluster $C'$ in $cltrList$} {

				\If {\rm $C$ and $C'$ are different clusters} {

					\If {\MergeClusters ($C$, $C'$) is {\tt TRUE}} {
				
						$cltrList$.\remove($C'$) 

					}
}
}
}
}

\BlankLine

\BlankLine

\hspace{-.4cm} {{\bf Procedure}: \QueryNeighborPoints}

\BlankLine

\Input{$ {\mathcal P}$, $ {\bf z}_i^{\mathcal P}$, $N_{\rm min}$, $\epsilon_{\rm Dist}$, $\epsilon_{\rm Crs}$, $\epsilon_{\rm Spd}$}
\Output{$cluster$}

\BlankLine

$cluster \leftarrow$ empty list

\For {\rm each point $ {\bf X}_j^{\mathcal P}$ in  ${\mathcal X}^{\mathcal P}$ } {

\If {\distance $\left( \left( y_{i}^{\mathcal P}, x_{i}^{\mathcal P} \right)^T,  \left( y_{j}^{\mathcal P}, x_{j}^{\mathcal P} \right)^T \right)  < \epsilon_{\rm Dist}$} {

 \If { {\color{red} $\left| c_i^{\mathcal P} - c_j^{\mathcal P} \right| < \epsilon_{\rm Crs}$}} {

\If { {\color{red} $ \left| s_i^{\mathcal P} - s_j^{\mathcal P} \right| < \epsilon_{\rm Spd}$}} {

  $cluster$.\add( ${\bf z}_j^{\mathcal P} $ ) }}

}

}

\If {\rm $cluster$.size  $ \geq N_{\rm min}$} {

Mark $ {\bf z}_i^{\mathcal P}$ as core point

}

\BlankLine

\BlankLine

\hspace{-.4cm}{\bf Procedure}: \MergeClusters

\BlankLine

\Input{$clusterA$, $clusterB$}
\Output{$merge$}

$merge \leftarrow {\tt FALSE}$

\For {\rm each point $ {\bf z}_j^{\mathcal P}$ in $clusterB$}  {

\If{ {\rm point $ {\bf z}_j^{\mathcal P}$ is core point and $clusterA$ contains cluster $Q$}} {

	$merge \leftarrow $ {\tt TRUE}

\For {\rm each point $ {\bf z}_l^{\mathcal P}$ in $clusterB$}  {

	$clusterA$.\add($ {\bf z}_l^{\mathcal P}$)
}
}

}

\caption{DBSCANSD}
\end{algorithm}

We apply the DBSCANSD algorithm to the ``moving" members of ${\mathcal P}$, ${\mathcal P}^{\rm mv}$, and the DBSCAN algorithm to the stationary members of ${\mathcal P}$, ${\mathcal P}^{\rm st}$ to get the results shown in Figure \ref{fig:DBSCANSDResults}. It should be noted that in this application, the speed threshold used was 2.5 knts and the direction threshold used was 90 degrees ($\epsilon_{\rm Spd} = 2.5~ {\rm knts}$ and $\epsilon_{\rm Crs} = 90~{\rm degrees}$)

{
\begin{figure}[H] 
\begin{center}
{\includegraphics[trim = 7cm 7cm 7cm 2cm, scale = 0.25]{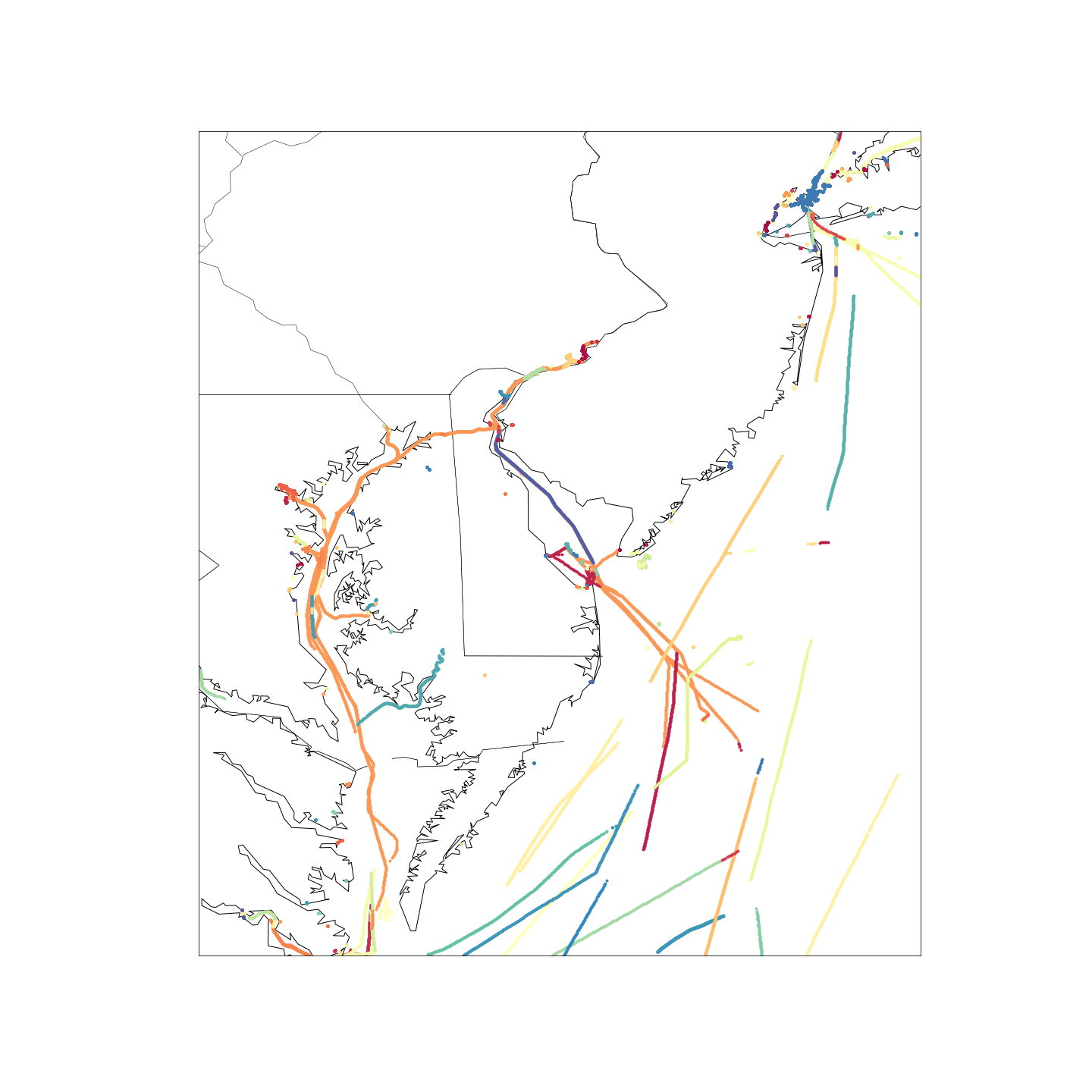}}
\caption{DBSCANSD Results. The different clusters are in different colors.}
\label{fig:DBSCANSDResults}
\end{center}
\end{figure}
}

From this point on, we will denote the set of points put in cluster $j$ as a result of the DBSCANSD algorithm as  ${\mathcal P}^{{\rm mv},~{\rm cl}~j},$ and we will assume that the number of observations in this cluster is $n_j^{\rm mv}$, making $${\mathcal P}^{{\rm mv},~{\rm cl}~j} = \left \{ {\bf z}_1^{{\rm mv},~{\rm cl}~j},  {\bf z}_2^{{\rm mv},~{\rm cl}~j}, \ldots,  {\bf z}_{n_j^{\rm mv}-1}^{{\rm mv},~{\rm cl}~j},  {\bf z}_{n_j^{\rm mv}}^{{\rm mv},~{\rm cl}~j} \right \}.$$ The set of points put in cluster $l$ as a result of the DBSCAN algorithm will be denoted as ${\mathcal P}^{{\rm st,~cl}~l},$ and we will assume that the number of observations in this cluster is $n_l^{\rm st}$, making $${\mathcal P}^{{\rm st},~{\rm cl}~l} = \left \{ {\bf z}_1^{{\rm st},~{\rm cl}~l},  {\bf z}_2^{{\rm st},~{\rm cl}~l}, \ldots,  {\bf z}_{n_l^{\rm st}-1}^{{\rm st},~{\rm cl}~l},  {\bf z}_{n_l^{\rm st}}^{{\rm st},~{\rm cl}~l} \right \}.$$

The next section discusses how we use these results to identify outliers or outlying/anomalous behavior in ships.

\section{Identifying Anomalous Behavior}{\label{sctn:anomalyDetection}}

To identify anomalous behavior in a new ship's trajectory, one first has to separate this new trajectory into a set of stationary points and a set of moving points. Just as Liu et. al (\cite{Liu2}) did, we identify the set of stationary points as that set such that the speed is less than 0.5 knots. The set of moving points is the complement of that. We will assume that there are a total of $m$ points in this new trajectory, and we will denote the set of points in this trajectory as ${\mathcal N}$ (for ``N"ew), where  $${\mathcal N} = \left \{ {\bf z}_1^{\mathcal N}, {\bf z}_2^{\mathcal N}, \ldots, {\bf z}_m^{\mathcal N} \right \}.$$  The set of points in ${\mathcal N}$ that are moving is $${\mathcal N}^{\rm mv} = \left \{ {\bf  z}_i^{\mathcal N} : s_{i}^{\mathcal N} \geq 0.5~{\rm knts} \right \},$$ and the set of stationary points is  $${\mathcal N}^{\rm st} = \left \{ {\bf z}_i^{\mathcal N} : s_{i}^{\mathcal N} < 0.5~{\rm knts} \right \}.$$ We will assume there are $m^{\rm mv}$ values in ${\mathcal N}^{\rm mv}$ and $m^{\rm st}$ values in ${\mathcal N}^{\rm st}$, making $$ {\mathcal N}^{\rm mv} = \left \{ {\bf z}_1^{{\mathcal N}^{\rm mv}}, {\bf z}_2^{{\mathcal N}^{\rm mv}}, \ldots, {\bf z}_{m^{\rm mv} - 1}^{{\mathcal N}^{\rm mv}}, {\bf z}_{m^{\rm mv}}^{{\mathcal N}^{\rm mv}} \right \},$$ and $$ {\mathcal N}^{\rm st} = \left \{ {\bf z}_1^{{\mathcal N}^{\rm st}}, {\bf z}_2^{{\mathcal N}^{\rm st}}, \ldots, {\bf z}_{m^{\rm st} - 1}^{{\mathcal N}^{\rm st}}, {\bf z}_{m^{\rm st}}^{{\mathcal N}^{\rm st}} \right \}.$$

We then see how the points in ${\mathcal N}^{\rm st}$ depart from stationary points in the training data set, and how the points in ${\mathcal N}^{\rm mv}$ depart from the moving points in the training data set. To do this, Liu et al. (\cite{Liu2}) first create two sets of points, one set which summarizes the stationary points in the training data, the other set which summarizes the moving points in the training data. These sets are respectively called the stationary sampling points and gravity vectors, and we denote these sets as ${\mathcal S}$ and ${\mathcal G}$. The set of new trajectory points, ${\mathcal N}$, are then compared to ${\mathcal S}$ and ${\mathcal G}$ (the set ${\mathcal N}^{\rm st}$ is compared to ${\mathcal S}$, and the set ${\mathcal N}^{\rm mv}$ is compared to ${\mathcal G}$), and it is from this comparison that a trajectory is identified as being anomalous or not.   

Subsection \ref{sctn:CalcGravAndStat} describes how the gravity vectors and stationary sampled points are calculated. Subsection \ref{sctn:CalcAnom} describes how these set of points, ${\mathcal S}$ and ${\mathcal G}$, are compared to the new trajectory, ${\mathcal N}$. The subsection specifically reviews how Liu et al. (\cite{Liu2}) calculate and assign an anomalous score to a new trajectory and then discusses our alternative anomalous score.

\subsection{Creating Gravity Vectors and Stationary Sampled Points}\label{sctn:CalcGravAndStat}

Generally speaking, a gravity vector is a point (or vector) that is meant to summarize and describe all of the moving points of the training set around it.  We will let ${\mathcal G}^{{\rm cl}~j}$ be the set of gravity vectors which summarize cluster $j$, and we will assume there are $n^{{\rm cl}~j}_{\rm grv}$ gravity vectors associated with cluster $j$, i.e.,  $${\mathcal G}^{{\rm cl}~j} = \left \{ {\bf g}_1^{{\rm cl}~j}, {\bf g}_2^{{\rm cl}~j}, \ldots, {\bf g}_{n^{{\rm cl}~j}_{\rm grv} - 1}^{{\rm cl}~j}, {\bf g}_{n^{{\rm cl}~j}_{\rm grv}}^{{\rm cl}~j} \right \}.  $$

  To calculate all of the  gravity vectors associated with cluster $j$, one first has to calculate the average course (direction) of the entire cluster.  We will call this average direction ${\overline c}^{{\rm mv},~{\rm cl}~j}$ and calculate it as $${\overline c}^{{\rm mv},~{\rm cl}~j} = {\frac{1}{n_j^{\rm mv}}} \sum_{k=1}^{n_j^{\rm mv}}c^{{\rm mv},~{\rm cl}~j}_k.$$     After calculating ${\overline c}^{{\rm mv},~{\rm cl}~j} $, one must define a line along this direction and divide this into segments of length $\delta$. Liu et al. (\cite{Liu2}) set $\delta = \epsilon$, and we do the same to achieve our results. All of the observations that are then within a particular  band (of width $\delta$) are considered and their average latitude, longitude, speed, direction, and median distance from the average position are calculated and reported in one gravity vector. This concept is illustrated in Figure 7, and in this paper we mathematically formalize the calculation of these vectors. This is done in the text box below.

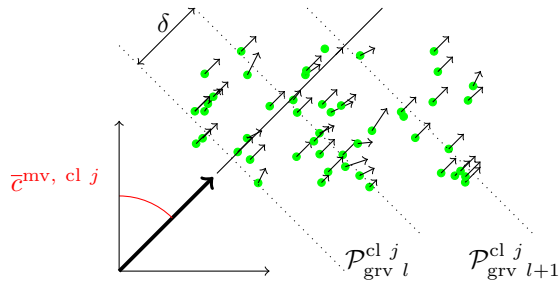
\begin{figure}[h!]\label{fig:GravVec}
{\centering

\begin{tikzpicture}

\draw[color = green, fill = green] (23.63800, 1.2710323) circle [radius = .05];
\draw[->] (23.63800, 1.2710323)  -- (23.83800, 1.4510323);
\draw[color = green, fill = green] (20.70091, 0.7684835) circle [radius = .05];
\draw[->] (20.70091, 0.7684835)  -- (20.90091, 0.8684835);
\draw[color = green, fill = green] (22.19662, 0.2706696) circle [radius = .05];
\draw[->] (22.19662, 0.2706696)  -- (22.39662, 0.3706696);
\draw[color = green, fill = green] (21.73469, 1.9576310) circle [radius = .05]; 
\draw[->] (21.87736,  1.8768501)  -- (21.97736,  1.9768501);
\draw[color = green, fill = green] (21.31523,   1.2765258) circle [radius = .05];
\draw[->] (21.31523,   1.2765258)  -- (21.41523,   1.4065258);
\draw[color = green, fill = green] (21.47661, 1.6106625) circle [radius = .05]; 
\draw[->] (21.47661, 1.6106625)  -- (21.67661, 1.7506625);
\draw[color = green, fill = green] (20.70571,  1.6106625) circle [radius = .05];
\draw[->] (20.70571,  1.6106625)  -- (20.85571,  1.9106625);
[\draw[color = green, fill = green] (22.00689, 0.3890405) circle [radius = .05]; 
[\draw[->] (22.00689, 0.3890405)  -- (22.30689, 0.4890405);
\draw[color = green, fill = green] (20.84811,  0.1728718) circle [radius = .05];
\draw[->] (20.84811,  0.1728718)  -- (20.94811,  0.3728718);
\draw[color = green, fill = green] (22.20429,  1.8689545) circle [radius = .05];
\draw[->] (22.20429,  1.8689545)  -- (22.40429,  1.968954);
\draw[color = green, fill = green] (20.74048,  0.4838273) circle [radius = .05];
\draw[->] (20.74048,  0.4838273)  -- (20.94048,  0.6838273);
\draw[color = green, fill = green] (20.57647, 0.5855337) circle [radius = .05];
\draw[->] (20.57647, 0.5855337)  -- (20.77647, 0.7855337);
\draw[color = green, fill = green] (22.36117, 0.8679415) circle [radius = .05];
\draw[->] (22.36117, 0.8679415)  -- (22.56117, 1.1679415);
\draw[color = green, fill = green] (21.48883,   1.6680491) circle [radius = .05];
\draw[->] (21.48883,   1.6680491)  -- (21.68883,   1.868049);
\draw[color = green, fill = green] (20.13777, 1.6244667) circle [radius = .05];
\draw[->] (20.13777, 1.6244667)  -- (20.33777, 1.8244667);
\draw[color = green, fill = green] (23.46295,  0.2691778) circle [radius = .05];
\draw[->] (23.46295,  0.2691778)  -- (23.66295,  0.4691778);
\draw[color = green, fill = green] (21.35790,  0.5222948) circle [radius = .05];
\draw[->] (21.35790,  0.5222948)  -- (21.55790,  0.7222948);
\draw[color = green, fill = green] (23.23690, 1.6532678) circle [radius = .05];
\draw[->] (23.23690, 1.6532678)  -- (23.43690, 1.853267);
\draw[color = green, fill = green] (21.00206, 1.1943203) circle [radius = .05];
\draw[->] (21.00206, 1.1943203)  -- (21.20206, 1.3943203);
\draw[color = green, fill = green] (20.13530, 1.1258566) circle [radius = .05];
\draw[->] (20.13530, 1.1258566)  -- (20.33530, 1.4258566);
\draw[color = green, fill = green] (21.66604, 0.5559928) circle [radius = .05];
\draw[->] (21.66604, 0.5559928)  -- (21.86604, 0.7559928);
\draw[color = green, fill = green] (21.97297, 0.5465695) circle [radius = .05];
\draw[->] (21.97297, 0.5465695)  -- (22.17297, 0.7465695);
\draw[color = green, fill = green] (22.17187, 0.6968539) circle [radius = .05];
\draw[->] (22.17187, 0.6968539)  -- (22.37187, 0.7168539);
\draw[color = green, fill = green] (20.62276, 1.9244634) circle [radius = .05];
\draw[->] (20.62276, 1.9244634)  -- (20.82276, 2.124463);
\draw[color = green, fill = green] (21.66326, 0.8349072) circle [radius = .05];
\draw[->] (21.66326, 0.8349072)  -- (21.86326, 0.9349072);
\draw[color = green, fill = green] (21.70014, 1.2180924) circle [radius = .05];
\draw[->] (21.70014, 1.2180924)  -- (21.90014, 1.4180924);
\draw[color = green, fill = green] (20.18003, 1.2264357) circle [radius = .05];
\draw[->] (20.18003, 1.2264357)  -- (20.38003, 1.4264357);
\draw[color = green, fill = green] (20.25083, 1.3200750) circle [radius = .05];
\draw[->] (20.25083, 1.3200750)  -- (20.45083, 1.5200750);
\draw[color = green, fill = green] (22.93654, 0.7740642) circle [radius = .05];
\draw[->] (22.93654, 0.7740642)  -- (23.13654, 0.9740642);
\draw[color = green, fill = green] (21.81469, 1.1133060) circle [radius = .05];
\draw[->] (21.81469, 1.1133060)  -- (22.11469, 1.2533060);
\draw[color = green, fill = green] (23.18588, 1.9233857) circle [radius = .05];
\draw[->] (23.18588, 1.9233857)  -- (23.38588, 2.1233857);
\draw[color = green, fill = green] (22.74904, 1.1178792) circle [radius = .05];
\draw[->] (22.74904, 1.1178792)  -- (22.94904, 1.3178792);
\draw[color = green, fill = green] (23.60088, 0.1814350) circle [radius = .05];
\draw[->] (23.60088, 0.1814350)  -- (23.80088, 0.3814350);
\draw[color = green, fill = green] (21.58943, 0.7273508) circle [radius = .05];
\draw[->] (21.58943, 0.7273508)  -- (21.78943, 0.9273508);
\draw[color = green, fill = green] (23.53825, 0.3317697) circle [radius = .05];
\draw[->] (23.53825, 0.3317697)  -- (23.73825, 0.5317697);
\draw[color = green, fill = green] (20.10950, 0.7668967) circle [radius = .05];
\draw[->] (20.10950, 0.7668967)  -- (20.30950, 0.9668967);
\draw[color = green, fill = green] (23.38196, .6255913) circle [radius = .05];
\draw[->] (23.38196, .6255913)  -- (23.58196, .8255913);
\draw[color = green, fill = green] (23.28079, 0.3049523) circle [radius = .05];
\draw[->] (23.28079, 0.3049523)  -- (23.48079, 0.504952);
\draw[color = green, fill = green] (21.36675, 1.1100533) circle [radius = .05];
\draw[->] (21.36675, 1.1100533)  -- (21.56675, 1.3100533);
\draw[color = green, fill = green] (21.77827, 0.3198541) circle [radius = .05];
\draw[->] (21.77827, 0.3198541)  -- (21.97827, 0.519854);
\draw[color = green, fill = green] (23.60390, 0.2577688) circle [radius = .05];
\draw[->] (23.60390, 0.2577688)  -- (23.80390, 0.4577688);
\draw[color = green, fill = green] (20.02340, 0.6954620) circle [radius = .05];
\draw[->] (20.02340, 0.6954620)  -- (20.22340, 0.895462);
\draw[color = green, fill = green] (23.16879, 1.2487912) circle [radius = .05];
\draw[->] (23.16879, 1.2487912)  -- (23.36879, 1.4487912);
\draw[color = green, fill = green] (21.95186, 1.2029393) circle [radius = .05];
\draw[->] (21.95186, 1.2029393)  -- (22.15186, 1.3529393);
\draw[color = green, fill = green] (20.00662, 1.1266327) circle [radius = .05];
\draw[->] (20.00662, 1.1266327)  -- (20.20662, 1.3266327);
\draw[color = green, fill = green] (22.78272, 1.0562798) circle [radius = .05];
\draw[->] (20.00662, 1.1266327)  -- (20.20662, 1.3266327);
\draw[color = green, fill = green] (23.70571, 1.4627505) circle [radius = .05];
\draw[->] (23.70571, 1.4627505)  -- (23.80571, 1.6627505);
\draw[color = green, fill = green] (22.32901, 0.1183555) circle [radius = .05];
\draw[->] (22.32901, 0.1183555)  -- (22.42901, 0.2183555);
\draw[color = green, fill = green] (21.69016, 0.1783267) circle [radius = .05];
\draw[->] (21.69016, 0.1783267)  -- (21.79016, 0.2783267);
\draw[->, line width = .05cm] (19, -1) -- (20.25, 0.25);
\draw [red] (19.7071,-.2929) arc [radius = 1, start angle = 45, end angle = 90];
\draw[->](19,-1) -- (19, 1);
\draw[->](19,-1) -- (21,-1);
\draw[<->](20, 2.5) -- (19.25, 1.75);
\draw [dotted] (19,2) -- (22, -1);
\draw [dotted] (20, 2.5) -- (   23  ,-.5);
\draw (20.3, 0.3) -- (22.5, 2.5);
\draw [dotted] (21.5, 2.5) -- (24.5,  -.5);
\node [left] at (19.8, 2.3) {$\delta$};
\node [below left] at (19, .5) { \textcolor{red}{ ${\overline c}^{{\rm mv,~cl}~j}$ }};
\node [below left] at (23, -.5) {$ {\mathcal P}^{{\rm cl}~j}_{{\rm grv}~l}$};
\node [below left] at (25, -.5) {$ {\mathcal P}^{{\rm cl}~j}_{{\rm grv}~l+1}$};

\end{tikzpicture}
\caption{Calculating Gravity Vectors}}
\end{figure}

\begin{tcolorbox} 

\begin{center}

{\bf Calculating Components of ${\bf g}_l^{{\rm cl}~j}$}

\end{center}

 The line along the direction of ${\overline c}^{{\rm mv},~{\rm cl}~j}$ is of length $L^{{\rm mv},~{\rm cl}~j}$, where $$L^{{\rm mv},~{\rm cl}~j} =  \left \{  \max_i \left[ y_{i}^{{\rm mv},~{\rm cl}~j} / \cos \left( {\overline c}^{{\rm mv},~{\rm cl}~j} \right) \right] - \min_i \left[ y_{i}^{{\rm mv},~{\rm cl}~j} / \cos \left( {\overline c}^{{\rm mv},~{\rm cl}~j} \right) \right] \right \}.$$ We are dividing this line into bands of width $\delta$, making the number of gravity vectors in cluster $j$,  $n^{{\rm cl}~j}_{\rm grv},$  $$n^{{\rm cl}~j}_{\rm grv}  = \left. L^{{\rm mv},~{\rm cl}~j} \right/ \delta.$$    The observations in cluster $j$  to be considered in the calculation of the $l^{\rm th}$ gravity $\left( 1 \leq l \leq n_{\rm grv}^{{\rm cl}~j} \right)$  vector are then $${\mathcal P}^{{\rm cl}~j}_{{\rm grv}~l} = \left \{   {\bf z}_i^{{\rm mv},~{\rm cl}~j}  : (l-1) \cdot b \cdot \delta \leq  \left.  y_{i}^{{\rm mv},~{\rm cl}~j} \right/ \cos \left( {\overline c}^{{\rm mv},~{\rm cl}~j} \right) \leq l \cdot b \cdot \delta \right \},$$ where $b = \min_i \left(  \left. y_{i}^{{\rm mv},~{\rm cl}~j}  \right/ \cos \left(   {\overline c}^{{\rm mv},~{\rm cl}~j} \right) \right).$ In words, the observations in ${\mathcal P}_{{\rm grv}~l}^{{\rm cl}~j}$ are those in the $l^{\rm th}$ slice of the moving points in cluster $j$. The direction and orientation of this slice is governed by ${\rm cos} \left( {\overline c}^{{\rm mv,~cl}~j} \right).$ The $l^{\rm th}$ gravity vector in cluster $j$ is then the average location, speed, and course over this set. It also includes the median distance of each observation to the location of the gravity vector. We denote the $l^{\rm th}$ gravity vector in cluster $j$ as $${\bf g}_l^{{\rm cl}~j}  =  \left( y_{{\rm grv}~l}^{{\rm cl}~j}, ~  x_{{\rm grv}~l}^{{\rm cl}~j},~  s_{{\rm grv}~l}^{{\rm cl}~j}, ~ c_{{\rm grv}~l}^{{\rm cl}~j}, ~ d_{{\rm grv}~l}^{{\rm cl}~j}    \right)^T,  $$  where $$  \left( y_{{\rm grv}~l}^{{\rm cl}~j},   x_{{\rm grv}~l}^{{\rm cl}~j},  s_{{\rm grv}~l}^{{\rm cl}~j},  c_{{\rm grv}~l}^{{\rm cl}~j}  \right)^T =   \left|  {\mathcal P}^{{\rm cl}~j}_{{\rm grv}~l}       \right|^{-1}\sum_{  {\bf z}_i^{\mathcal P} \in    {\mathcal P}^{{\rm cl}~j}_{{\rm grv}~l}     } {\bf z}_i^{\mathcal P},$$ and   $$ d^{{\rm cl}~j}_{{\rm grv}~l} = {\rm median}_{{\bf z}_i^{\mathcal P} \in {\mathcal P}^{{\rm cl}~j}_{{\rm grv}~l}}  \left \{  \left \| \left( {y}_{i}^{{\rm mv},~{\rm cl}~j},  {x}_{i}^{{\rm mv},~{\rm cl}~j}   \right)^T  -      \left(  y_{{{\rm grv}~l}}^{{\rm cl}~j}, x_{{\rm grv}~l}^{{\rm cl}~j} \right)^T   \right \| \right \}.$$

\end{tcolorbox}

Stationary sampling points are meant to describe the stationary points in the training set. For each stationary cluster, the number of stationary sampled points meant to summarize it are $N/\epsilon$, and they are randomly selected according to the following algorithm:

\begin{algorithm}[H]

	\SetKwInOut{Input}{input}
	\SetKwInOut{Output}{output}
	\SetKwFunction{QueryNeighborPoints}{QueryNeighborPoints}
	\SetKwFunction{StationarySamplingPoints}{StationarySamplingPoints}
	\SetKwFunction{add}{add}
	\SetKwFunction{MergeClusters}{MergeClusters}
	\SetKwFunction{remove}{remove}
	\SetKwFunction{distance}{distance}

\Input{${\mathcal P}^{{\rm st},~{\rm cl}~j}$,  $N_{\rm min}$, $\epsilon_{\rm Dist}$}
	\Output{${\mathcal S}^{{\rm cl}~j}$}

\BlankLine

	$ {\mathcal S}^{{\rm cl}~j} \leftarrow$ empty 

	${\rm Lat}_1,~{\rm Lat}_2 \leftarrow $ minimum and maximum of all points' latitude in ${\mathcal P}^{{\rm st},~{\rm cl}~l}$

	${\rm Lon}_1,~{\rm Lon}_2 \leftarrow $ minimum and maximum of all points' longitude in ${\mathcal P}^{{\rm st},~{\rm cl}~l}$

	$ {\rm Area} \leftarrow \left|  \left( {\rm Lat}_1 - {\rm Lat}_2 \right) \cdot \left( {\rm Lon}_1 - {\rm Lon}_2 \right) \right|$

	\If{\rm ${\rm Area} = 0$} {
		$sample\_size = 1$}
	\Else {
		$sample\_size = ceiling \left( { \left. {\rm Area} \right/ \left( \pi \cdot \epsilon_{\rm Dist}^2 \right) } \right)$}

	$count \leftarrow 0$ 

	\While{ $count < sample\_size$}{ 
		Randomly select one point from cluster ${\mathcal P}^{{\rm st},~{\rm cl}~l}$

		\If{ Randomly selected point is far from all points in ${\mathcal S}^{{\rm cl}~j}$} {

			Add point to ${\mathcal S}^{{\rm cl}~j}$

				$count = count + 1$}}

\caption{Extracting Stationary Sampling Points from ${\mathcal P}^{{\rm st,~cl}~j}$}
\end{algorithm}

Assume there are $n_{\rm ssp}^{{\rm cl}~j}$ stationary sampled points in stationary cluster $j$. We will call this set of points ${\mathcal S}^{{\rm cl}~j}$, and $${\mathcal S}^{{\rm cl}~j} = \left \{ {\bf s}_1^{{\rm cl}~j}, {\bf s}_2^{{\rm cl}~j}, \ldots, {\bf s}^{{\rm cl}~j}_{n^{{\rm cl}~j}_{\rm ssp}} \right \},$$ where $${\bf s}_l^{{\rm cl}~j} = \left( y^{{\rm cl}~j}_{{\rm ssp}~l},  x^{{\rm cl}~j}_{{\rm ssp}~l},  s^{{\rm cl}~j}_{{\rm ssp}~l},  c^{{\rm cl}~j}_{{\rm ssp}~l} \right).$$

It is with the gravity vectors and stationary points that an anomaly score is calculated. The subsection below explains how Liu (\cite{Liu2}) calculates this anomaly. We add some mathematical and statistical detail to their calculations and also introduce an alternative and more flexible  way to measure anomalous behavior.

\subsection{Calculating Anomalous Behavior}\label{sctn:CalcAnom}

To assign an anomaly score to ${\mathcal N}$, Liu et al. (\cite{Liu2}) first split the new track into its stationary and moving parts, ${\mathcal N}^{\rm st}$ and ${\mathcal N}^{\rm mv}.$   For each point in ${\mathcal N}^{\rm st}$, they calculate the smallest distance between it and the set of stationary sampled points.  This distance is called the Absolute Distance Deviation ($ADD$), and for point $i$ in ${\mathcal N}^{\rm st}$, it is calculated as $$ADD_i^{{\mathcal N}^{\rm st}} = \min_{j,l}  \left \{ \left \| \left( y_{i}^{{\mathcal N}^{\rm st}}, x_{i}^{{\mathcal N}^{\rm st}} \right)^T - \left( y_{{\rm ssp}~l}^{{\rm cl}~j},   x_{{\rm ssp}~l}^{{\rm cl}~j} \right)^T \right \| \right \}.$$

For each point in ${\mathcal N}^{\rm mv}$, they calculate two distance metrics, the Relative Distance Deviation (RDD) and the Cosine Division Distance (CDD). The RDD is similar to the ADD in that it calculates the smallest distance between a point and the set of gravity vectors, but this metric is different in that it accounts for the variation and geographical spread around the gravity vector. It does this  by dividing the distance by the median of the associated gravity vector.  For the $i^{\rm th}$ point in ${\mathcal N}^{\rm mv},$  $RDD$ is calculated as  $$RDD_i^{ {\mathcal N}^{\rm mv}} = \min_{l,j}  \left \{  \left.   \left \|  \left( y_{i}^{ {\mathcal N}^{\rm mv}}, x_{i}^{ {\mathcal N}^{\rm mv}} \right)^T -  \left( y_{{\rm grv}~l}^{{\rm cl}~j}, x_{{\rm grv}~l}^{{\rm cl}~j} \right)^T  \right \|   \right/ d_{{\rm grv}~l}^{{\rm cl}~j} \right \}. $$

The CDD accounts for any difference in heading and/or speed a point in ${\mathcal N}^{\rm mv}$ may have from the closest gravity point. For point $i$ in ${\mathcal N}^{\rm mv}$, it is calculated as   \begin{equation} \label{eqn:ourCDD} CDD_i^{{\mathcal N}^{\rm mv}} = {\rm cos(\alpha)} \cdot {\frac{ {\rm min} \left( s^*_{\rm grv}, s^{{\mathcal N}^{\rm mv}}_i \right)}{ {\rm max} \left( s^*_{\rm grv}, s^{{\mathcal N}^{\rm mv}}_i \right)}}, \end{equation} where $\alpha = \left| c^* - c_i^{{\mathcal N}^{\rm mv}} \right|$, and $c^*$ and $s^*$ are the course and speed components of gravity vector ${\bf g}^*$ (the closest gravity vector), where $${\bf g}^* = {\rm argmin}_{{\bf g} \in \cup_j {\mathcal G}^{{\rm cl}~j}}  \left \{  \left.   \left \|  \left( y_{i}^{ {\mathcal N}^{\rm mv}}, x_{i}^{ {\mathcal N}^{\rm mv}} \right)^T -  \left( y_{{\rm grv}~l}^{{\rm cl}~j}, x_{{\rm grv}~l}^{{\rm cl}~j} \right)^T  \right \|   \right/ d_{{\rm grv}~l}^{{\rm cl}~j} \right \}.   $$    The equation in (\ref{eqn:ourCDD}) looks at the ratio of the two different speeds and the magnitude in the difference of their directions. If the speeds and courses are identical, $CDD_i^{{\mathcal N}^{\rm mv}}  = 1,$ and the smaller its value, the greater the anomaly.

It should be noted that our definition of CDD is different than how it is written in Liu et al. (\cite{Liu2}).     They write CDD as $$  CDD_{{\rm Liu}~i}^{{\mathcal N}^{\rm mv}} = \max_{l,j}  \left \{ {\rm cos \left(   \left|   c_{{\rm grv}~l}^{{\rm cl}~j} - c_i^{{\mathcal N}^{\rm mv}}    \right|  \right)} \cdot {\frac{ {\rm min} \left( s^{{\rm cl}~j}_{{\rm grv}~l}, s^{{\mathcal N}^{\rm mv}}_i \right)}{ {\rm max} \left( s_{{\rm grv}~l}^{{\rm cl}~j}, s^{{\mathcal N}^{\rm mv}}_i \right)}} \right \}   ,$$ but this definition seems unclear. As they have written it, they are looking for a gravity vector  with a speed and course which most closely matches the speed and course of point $i$ in ${\mathcal N}^{\rm mv}$.  The location of this gravity vector is not considered. Given their definition of CDD, it would be possible for the point $i$ in ${\mathcal N}^{\rm mv}$ to be in the proximity of points going in an opposite direction and at a much different speed, yet still have a CDD value that went unnoticed. With this in mind, we modified/rewrote the definition of CDD.   For each point $i$ in ${\mathcal N}^{\rm mv}$, we evaluate $CDD$ at the closest gravity vector. The closest gravity vector is that which minimizes  RDD (${\bf g}^*$ in Equation \ref{eqn:ourCDD}).

With these three metrics (ADD, RDD and CDD) calculated, Liu et al. (\cite{Liu2}) score each observation in ${\mathcal N}$ depending on whether the calculated $ADD$, $RDD$, or $CDD$ are beyond a certain threshold.  For points in ${\mathcal N}^{\rm st}$, the score is calculated as $${\rm Scr}_{\rm Liu} \left( {\bf z}_i^{{\mathcal N}^{\rm st}} \right) = \left \{ \begin{array}{ll} 1 & {\rm if}~ADD_i^{{\mathcal N}^{\rm st}} > ADD_{\rm Threshold} \\ 0 & {\rm otherwise} \end{array} \right. .$$  For points in ${\mathcal N}^{\rm mv}$, the score is calculated as $${\rm Scr}_{\rm Liu} \left( {\bf z}_i^{{\mathcal N}^{\rm mv}}  \right) = \left \{ \begin{array}{ll} 1 & {\rm if}~RDD_i^{{\mathcal N}^{\rm mv}} > RDD_{\rm Threshold}~{\rm or}~CDD_i^{{\mathcal N}^{\rm mv}} < CDD_{\rm Threshold} \\ 0 & {\rm otherwise} \end{array} \right. $$ They then calculate the total anomaly score for the new trajectory, ${\mathcal N}$, as   $${\rm Anom}_{\rm Liu} \left( {\mathcal N} \right) = m^{-1} \left( \sum_{j=1}^{m^{\rm st}}  {\rm Scr}_{\rm Liu} \left( {\bf z}_j^{{\mathcal N}^{\rm st}} \right) + \sum_{j=1}^{m^{\rm mv}} {\rm Scr}_{\rm Liu} \left( {\bf z}_j^{{\mathcal N}^{\rm mv}}  \right)\right).$$

One way they obtain these three thresholds, $ADD_{\rm Threshold},$ $RDD_{\rm Threshold},$ and $CDD_{\rm Threshold}$ is by considering an entirely different data set, ${\mathcal D}$ (for ``D"ifferent), calculating the distribution of $ADD,$ $RDD$ and $CDD$ values in this data set, and then letting $ADD_{\rm Threshold}$ and $RDD_{\rm Threshold}$ be the $95^{\rm th}$ percentile of the distribution in the $ADD$ and $RDD$ values, and $CDD_{\rm Threshold}$ be the $5^{\rm th}$ percentile of the distribution in the $CDD$ values.  This is mathematically formulated in the textbox below.

\begin{tcolorbox}

\begin{center}

{\bf Calculating Threshold Values}

\end{center}

We assume ${\mathcal D}$ has $r$ observations, $r^{\rm st}$ which are stationary and $r^{\rm mv}$ which are moving. From this data set, we calculate $r^{\rm st}$ values of $ADD$, written as $$ {\mathcal {ADD}}^{\mathcal D} = \left \{  ADD_i^{{\mathcal D}^{\rm st}}  : 1 \leq i \leq r^{\rm st} \right \},$$ and $r^{\rm ~mv}$ values of $RDD$ and $CDD$, written as $$ {\mathcal {RDD}}^{\mathcal D} = \left \{   RDD_j^{{\mathcal D}^{\rm mv}}  : 1 \leq j \leq r^{\rm mv} \right \},$$ and $$ {\mathcal {CDD}}^{\mathcal D} = \left \{   CDD_j^{{\mathcal D}^{\rm mv}}  : 1 \leq j \leq r^{\rm mv} \right \}.$$   $ADD_{\rm Threshold}$ is the $95^{\rm th}$ percentile of ${\mathcal ADD}^{\mathcal D}$, $RDD_{\rm Threshold}$ is the $95^{\rm th}$ percentile of ${\mathcal RDD}^{\mathcal D}$, and $CDD_{\rm Threshold}$ is the $5^{\rm th}$ percentile of ${\mathcal CDD}^{\mathcal D}$.

\end{tcolorbox}

A setback to the method that Liu et al. use to measure anomalous behavior is that it fails to highlight the extremity of the anomaly.   For example, if ${\mathcal N}$ were a set of stationary points, all of which were just barely beneath $ADD_{\rm Threshold}$, its anomaly score would be 0.  Its anomaly score would also be 0 if all these stationary points were significantly below $ADD_{\rm Threshold}$. Yet another setback to Liu et al.'s  method  is that the statistical significance of their anomaly   is not immediately transparent. For instance, if ${\rm Anom}_{\rm Liu} \left( {\mathcal N} \right) = 0.3$, it is not obvious from this statistic what the probability is of observing an anomaly  as or more extreme than the one observed.

 Part of the novelty proposed in this paper is in how we calculate the anomaly of a new trajectory. We calculate the anomaly in such a way that the two extreme cases described above would have considerably  different scores.   The statistical significance of our anomaly score  is also transparent (we ultimately report a $z-$score).    

We begin by scoring each observation not with a 1 or a 0 (as Liu et al. did), but with  the fraction of ${\mathcal ADD}^{\mathcal D},~{\mathcal RDD}^{\mathcal D},$  and ${\mathcal CDD}^{\mathcal D}$ values that are more extreme than the one observed.  The smaller this fraction, the more unusual/extreme an observation. These scores are written below in Equations \ref{eqn:Score_St} and \ref{eqn:Score_Mv}.  \begin{eqnarray}  \label{eqn:Score_St}   {\rm Scr}_{\rm Botts} \left( {\bf z}_i^{{\mathcal N}^{\rm st}} \right) &  =  &   {\frac{1}{r^{\rm st}}} \sum_{j=1}^{r^{\rm st}}  {\mathbbm{1}} \left( ADD_j^{ {\mathcal D}^{\rm st}} \geq ADD_i^{ {\mathcal N}^{\rm st}} \right),~~{\rm and} \\   \label{eqn:Score_Mv}  {\rm Scr}_{\rm Botts} \left( {\bf z}_i^{{\mathcal N}^{\rm mv}} \right) &  = &    \min \left[   {\frac{1}{r^{\rm mv}}}\sum_{j=1}^{r^{\rm mv}}     {\mathbbm{1}} \left( RDD_j^{{\mathcal D}^{\rm mv}} \geq RDD_i^{{\mathcal N}^{\rm mv}}  \right)  \right. ,  \\   \nonumber & & \left.  {\frac{1}{r^{\rm mv}}}   \sum_{j=1}^{r^{\rm mv}}   {\mathbbm{1}} \left( CDD_j^{{\mathcal D}^{\rm mv}} \leq CDD_i^{{\mathcal N}^{\rm mv}}  \right) \right].  \end{eqnarray}    The quantity ${\rm Scr}_{\rm Botts} \left(  {\bf z}_i^{{\mathcal N}^{\rm st}} \right)$ estimates the probability that any randomly selected ADD value will be more extreme than the one observed.   The quantity ${\rm Scr}_{\rm Botts} \left( {\bf z}_i^{{\mathcal N}^{\rm mv}} \right)$ considers the probability of observing an RDD greater than the one observed and a CDD less than the one observed, and returns the smaller of the two.  The distribution of ${\rm Scr}_{\rm Botts} \left( {\bf z}_i^{{\mathcal N}^{\rm st}} \right)$ can be approximated with that of a uniform random variable, $U_1$, where ${\mathbbm E} \left( U_1 \right) = 0.5$ and ${\rm Var} \left( U_1 \right) = 1/12.$    The distribution of ${\rm Scr}_{\rm Botts} \left( {\bf z}_i^{{\mathcal N}^{\rm mv}} \right)$ can be approximated with that of the minimum of two uniform random variables, $U_2$ and $U_3$.    If $U_{\rm min} = \min \left( U_2, U_3 \right)$, then ${\mathbbm E} \left( U_{\rm min} \right) = 1/3$ and ${\rm Var} \left( U_{\rm min} \right) = 1/18.$

With ${\rm Scr}_{\rm Botts} \left( {\bf z}_i^{{\mathcal N}^{\rm st}} \right)$ and ${\rm Scr}_{\rm Botts} \left( {\bf z}_i^{{\mathcal N}^{\rm mv}} \right)$ approximated by these distributions that have known first and second moments, we can apply the central limit theorem and conclude that \begin{equation} \label{eqn:W_st}   W^{\rm st}  =   \left \{   \left. \left[  \left(    m^{\rm st} \right)^{-1}    \sum_{j=1}^{m^{\rm st}} {\rm Scr}_{\rm Botts} \left( {\bf z}_j^{{\mathcal N}^{\rm st}} \right) - .5 \right]  \right/ \sqrt{ {\frac{1}{12 m^{\rm st}}} } \right \}     \stackrel{d}{\longrightarrow} N(0,1)~{\rm as}~r^{\rm st}~\&~m^{\rm st} \rightarrow \infty, \end{equation}     and   \begin{equation}  \label{eqn:W_mv}     W^{\rm mv}  =   \left \{   \left. \left[  \left(    m^{\rm mv} \right)^{-1}    \sum_{j=1}^{m^{\rm mv}} {\rm Scr}_{\rm Botts} \left( {\bf z}_j^{{\mathcal N}^{\rm mv}} \right) -  {\frac{1}{3}}  \right]  \right/ \sqrt{ {\frac{1}{18 m^{\rm mv}}} } \right \}     \stackrel{d}{\longrightarrow} N(0,1)~{\rm as}~r^{\rm mv}~\&~m^{\rm mv} \rightarrow \infty .   \end{equation}

Our final anomaly statistic, ${\rm Anom}_{\rm Botts} \left( {\mathcal N} \right)$, combines the two asymptotically normal random variables in Equations \ref{eqn:W_st} and \ref{eqn:W_mv} as shown below in Equation \ref{eqn:BottsAnom}.

\begin{equation}   \label{eqn:BottsAnom}  {\rm Anom}_{\rm Botts} \left( {\mathcal N} \right) = \left \{    \begin{array}{ll} W^{\rm st} & {\rm if}~ m^{\rm mv} = 0 \\ W^{\rm mv} & {\rm if}~m^{\rm st} = 0 \\  \left. \left( W^{\rm st} + W^{\rm mv} \right)   \right/ \sqrt{2} & {\rm if}~m^{\rm st} > 0 ~\& ~m^{\rm mv} > 0  \end{array}  \right.   . \end{equation}

Assuming independence across observations in the track ${\mathcal N}$, and assuming independence of the variables $ADD, RDD,$ and $CDD$, the expected value and variance of  ${\rm Anom}_{\rm Liu} \left( {\mathcal N} \right)$ are  \begin{eqnarray*}    {{\mathbbm E}} \left[  {\rm Anom}_{\rm Liu} \left( {\mathcal N} \right) \right] &  = &  .05 + \left( \left. m^{\rm mv} \right/ m \right) \cdot \left( .05 - .05^2 \right),~~~~{\rm and} \\ {\rm Var} \left[ {\rm Anom}_{\rm Liu} \left( {\mathcal N} \right) \right] & = &      \left( m^{\rm st} + m^{\rm mv} \right)^{-2} \left[ m^{\rm st} \cdot .05 \cdot .95 + 2  m^{\rm mv} \cdot .05 \cdot .95 +    m^{\rm mv} \cdot .05^2 \cdot  \left( 1 - .05^2 \right) \right], \end{eqnarray*}  and the asymptotic distribution of ${\rm Anom}_{\rm Botts} \left( {\mathcal N} \right)$ is $N(0,1)$,    making $${\mathbbm E} \left[ {\rm Anom}_{\rm Botts} \left( {\mathcal N} \right) \right]= 0~~~{\rm and}~~~{\rm Var} \left[ {\rm Anom}_{\rm Botts} \left( {\mathcal N} \right) \right] = 1.$$    Theorem \ref{thm:LiuExpectation} in Section \ref{sctn:Proofs} of the Appendix calculates the expected value and variance of ${\rm Anom}_{\rm Liu} \left( {\mathcal N} \right)$. Theorem \ref{thm:BottsDistn} in Section \ref{sctn:Proofs} of the Appendix justifies the asymptotic normality of ${\rm Anom}_{\rm Botts} \left( {\mathcal N} \right)$. In Section \ref{sctn:Simulations} of the Appendix, some simulation results are provided which illustrate this asymptotic normality and its invariance to the distribution of $ADD$, $CDD$, and $RDD$.

We would expect normal/in-family trajectories to have ${\rm Anom}_{\rm Liu} \left( {\mathcal N} \right)$ and ${\rm Anom}_{\rm Botts} \left( {\mathcal N} \right)$ values close to these expected values. For anomalous trajectories, we would expect  ${\rm Anom}_{\rm Liu} \left( {\mathcal N} \right)$ to be  considerably larger than its expected value, and ${\rm Anom}_{\rm Botts} \left( {\mathcal N} \right)$ to be  considerably less than its expected value. Remember that ${\rm Anom}_{\rm Liu} \left( {\mathcal N} \right)$ counts the fraction of points in ${\mathcal N}$ that are beyond a certain threshold.  The larger the value of ${\rm Anom}_{\rm Liu} \left( {\mathcal N} \right)$, the more anomalous a trajectory is.  ${\rm Anom}_{\rm Botts} \left( {\mathcal N} \right)$ considers the fraction of values in ${\mathcal{ADD}}^{\mathcal D},~{\mathcal{RDD}}^{\mathcal D},~{\rm and}~{\mathcal{CDD}}^{\mathcal D}$ that are more extreme than those observed in ${\mathcal N}$ and turns this into a $z-$score. The smaller the value of ${\rm Anom}_{\rm Botts} \left( {\mathcal N} \right)$, the more anomalous a trajectory. It is thus expected that ${\rm Anom}_{\rm Liu} \left( {\mathcal N} \right)$ and ${\rm Anom}_{\rm Botts} \left( {\mathcal N} \right)$ will jointly go in opposite directions of their respective expectations.

Figures \ref{fig:Example1} and \ref{fig:Example2} show normal trajectories with values of ${\rm Anom}_{\rm Liu} \left( {\mathcal N} \right)$ and ${\rm Anom}_{\rm Botts} \left( {\mathcal N} \right)$ that are close to their expected values.   Figures \ref{fig:Example3} and \ref{fig:Example4} show abnormal trajectories. In these cases, ${\rm Anom}_{\rm Liu} \left( {\mathcal N} \right)$ is  higher than expected, and ${\rm Anom}_{\rm Botts} \left( {\mathcal N} \right)$ is significantly lower than expected. Assuming the $ADD$, $RDD$, and $CDD$ values in ${\mathcal N}$ come from the same distribution as those in the training set and ${\mathcal D}$, the probability of observing ${\rm Anom}_{\rm Botts} \left( {\mathcal N} \right)$ scores as or more extreme than the ones observed in Figures \ref{fig:Example3} and \ref{fig:Example4} is less than $10^{-68}$.

Figures \ref{fig:Example5} and \ref{fig:Example6} show other sets of ${\mathcal N}$ with surprising results of ${\rm Anom}_{\rm Liu} \left( {\mathcal N} \right)$ and ${\rm Anom}_{\rm Botts} \left( {\mathcal N} \right).$ In both cases, the values of ${\rm Anom}_{\rm Liu} \left( {\mathcal N} \right)$  do not suggest anomalous behavior, yet the values of ${\rm Anom}_{\rm Botts} \left( {\mathcal N} \right)$  do.   They are  less than expected for a normal trajectory. The probabilities of observing ${\rm Anom}_{\rm Botts} \left( {\mathcal N} \right)$ scores as low as those observed in Figures \ref{fig:Example5} and \ref{fig:Example6} are lower than $2 \times 10^{-15}$ and $2 \times 10^{-4}$, respectively.      In both of these figures, the difference between the two metrics  illustrates why ${\rm Anom}_{\rm Botts} \left( {\mathcal N} \right)$ may be a preferred metric to ${\rm Anom}_{\rm Liu} \left( {\mathcal N} \right).$

In Figure \ref{fig:Example5}, ${\mathcal N}$ is made up entirely of stationary points, and they are all (nearly) at the same location. Each point has an $ADD$ that is approximately .017, which is beneath the $ADD$ threshold of .034. Since they are all beneath the threshold, ${\rm Anom}_{\rm Liu} \left( {\mathcal N} \right) = 0$. ${\rm Anom}_{\rm Botts} \left( {\mathcal N} \right)$, however, does capture how extreme these values of $ADD$ are, since it does not depend on a threshold and merely counts the number of $ADD^{\mathcal D}$ values that are greater than it.

The same principle applies in Figure \ref{fig:Example6}. In this case, very few $CDD$ values in the moving part of ${\mathcal N}$ are below the $CDD$ threshold of -.746,  and no $RDD$ values are above the $RDD$ threshold of 2.255,  yet 62\% of ${\mathcal N}$'s $RDD$ values are within the $95^{\rm th}$ and $70^{\rm th}$ percentile of the $RDD$ values in ${\mathcal RDD}^{\mathcal D}$.

\section{Conclusion}

In this paper, we illustrate, correct, and clarify the DBSCANSD clustering algorithm presented in Liu et al. (\cite{Liu1} ). The DBSCANSD algorithm is one of few algorithms that allows one to cluster AIS data based on location, speed, and course. With these clustering results, we offer an alternative to the anomaly metric presented in Liu et al. (\cite{Liu2}). The statistical significance of the metric we propose is transparent (unlike the one in Liu et al. (\cite{Liu2})), and the asymptotic distribution of our proposed statistic is invariant to the distributions of course, speed, or location in the training data.  In the future we hope to consider other variables such as heading and/or time in the clustering algorithm. Perhaps they can also be considered when trying to detect anomalous behavior in ships.

{
\begin{figure}[H] 
\begin{center}
{\includegraphics[trim = 5cm 6cm 5cm 6cm, clip, scale = 0.25]{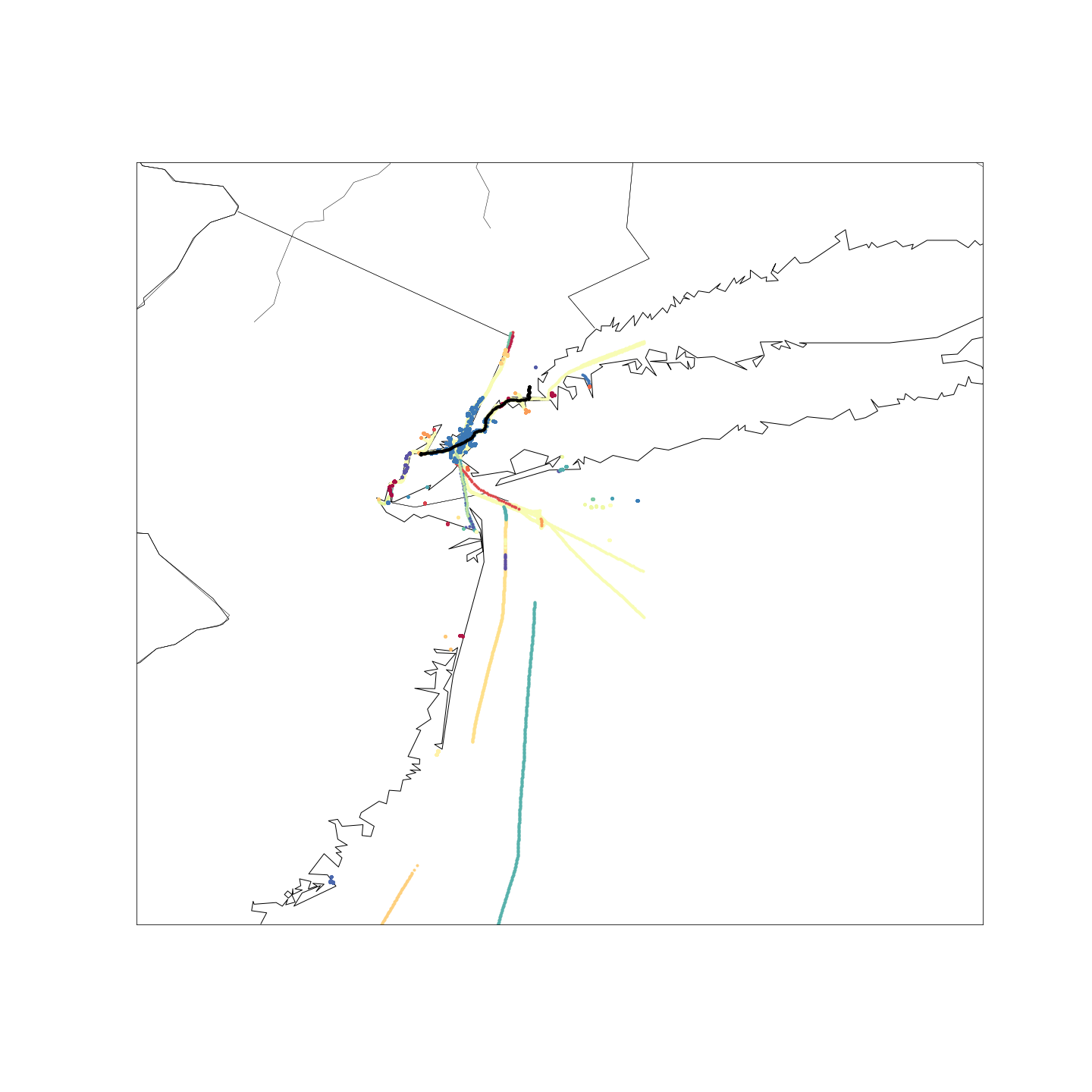}}

\caption{The values of ${\mathcal N}$ are in black, $m^{\rm st} = 69,$ and $m^{\rm mv} = 176.$ $ {\mathbbm E}  \left[ {\rm Anom}_{\rm Liu} \left( {\mathcal N} \right) \right] = 0.0841,~ {\rm StDev} \left[    {\rm Anom}_{\rm Liu} \left( {\mathcal N} \right)     \right] = .0184,~$  $ {\mathbbm E}  \left[ {\rm Anom}_{\rm Botts} \left( {\mathcal N} \right) \right] = 0,~{\rm and}~ {\rm StDev} \left[    {\rm Anom}_{\rm Botts} \left( {\mathcal N} \right)     \right] = 1.$      ${\rm Anom}_{\rm Liu} \left( {\mathcal N} \right) = 0.094,~{\rm and}~{\rm Anom}_{\rm Botts} \left(  {\mathcal N} \right) = 0.767.$ These numbers indicate no abnormality, and the picture illustrates a common ship path north of Long Island, NY.}

\label{fig:Example1}
\end{center}
\end{figure}
}

{
\begin{figure}[H] 
\begin{center}
{\includegraphics[trim = 6cm 12cm 5cm 7cm, clip, scale = 0.25]{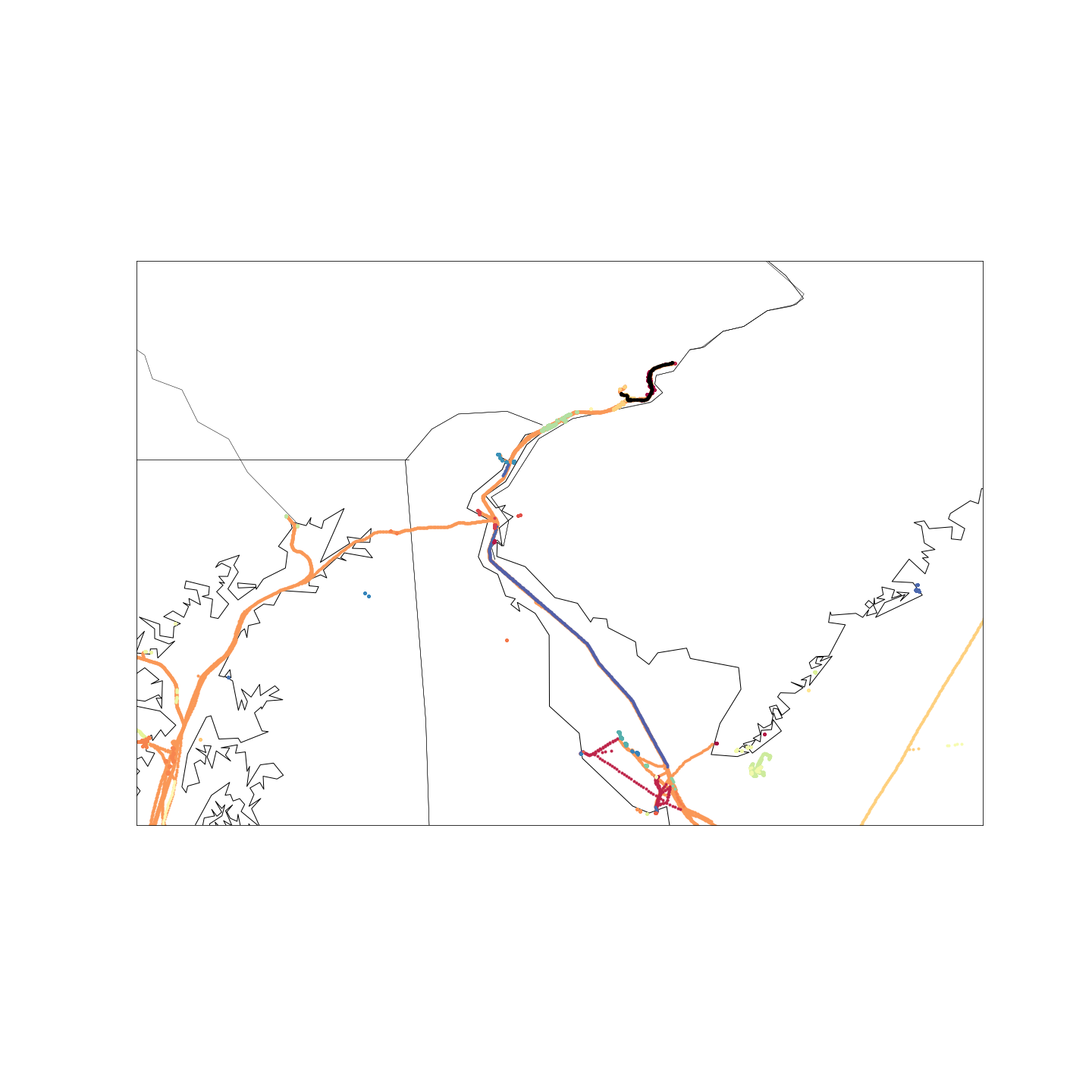}}

\caption{The values of ${\mathcal N}$ are in black, $m^{\rm st} = 31$, and $m^{\rm mv} = 172.$ $ {\mathbbm E}  \left[ {\rm Anom}_{\rm Liu} \left( {\mathcal N} \right) \right] = 0.0902,~ {\rm StDev} \left[    {\rm Anom}_{\rm Liu} \left( {\mathcal N} \right)     \right] = .0210,$  $~ {\mathbbm E}  \left[ {\rm Anom}_{\rm Botts} \left( {\mathcal N} \right) \right] = 0,~{\rm and}~ {\rm StDev} \left[    {\rm Anom}_{\rm Botts} \left( {\mathcal N} \right)     \right] = 1.$      ${\rm Anom}_{\rm Liu} \left( {\mathcal X}^{\mathcal P} \right) = 0.064,~{\rm and}~{\rm Anom}_{\rm Botts} \left(  {\mathcal N} \right) = -2.564.$ These numbers indicate no severe abnormality, and the picture illustrates a common ship path in the Delaware River.}

\label{fig:Example2}
\end{center}
\end{figure}
}

{
\begin{figure}[H] 
\begin{center}
{\includegraphics[trim = 0cm 6.5cm 0cm 6.5cm, clip, scale = 0.25]{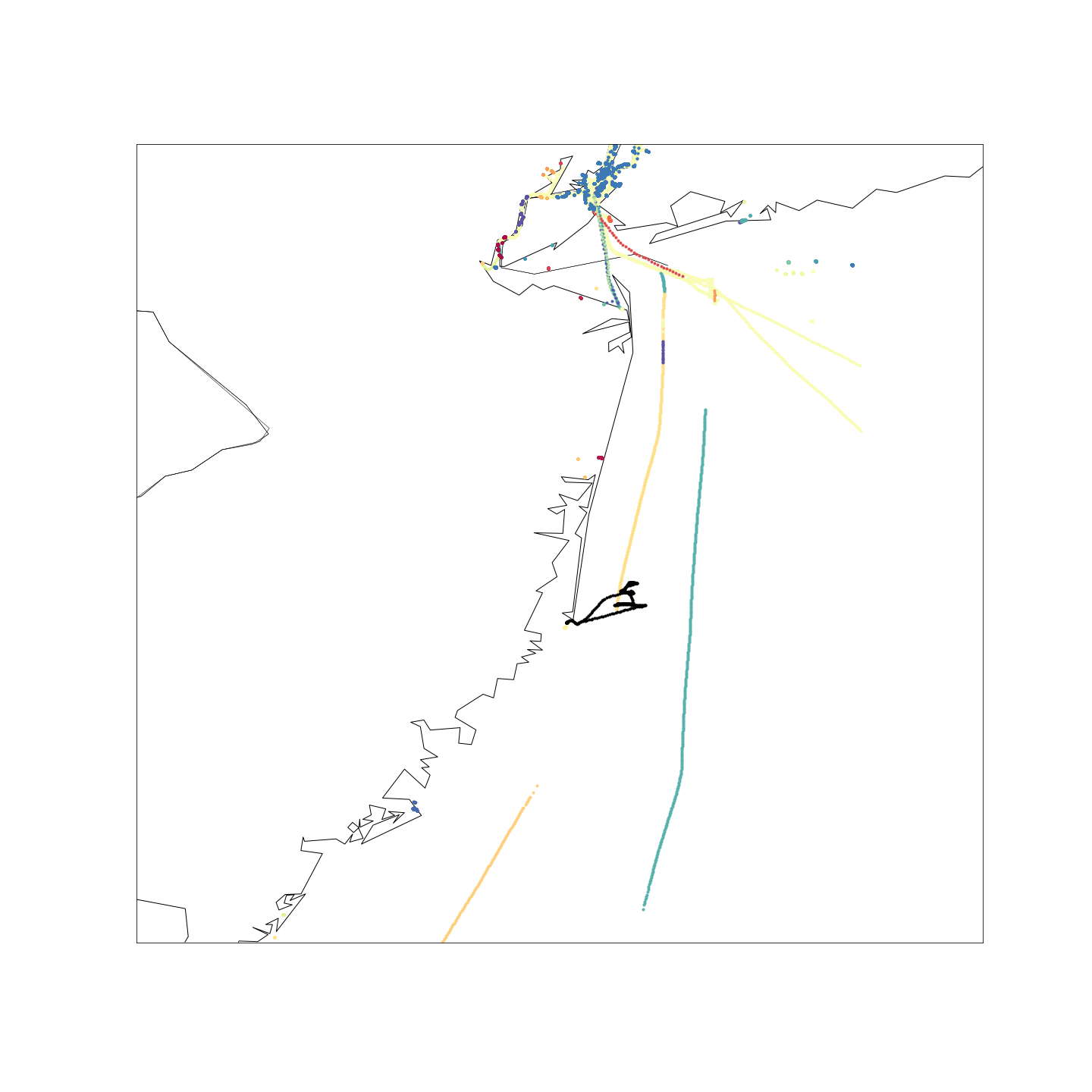}}

\caption{The values of  ${\mathcal N}$ are in black, $m^{\rm st} = 37$, and $m^{\rm mv} = 249.$ $ {\mathbbm E}  \left[ {\rm Anom}_{\rm Liu} \left( {\mathcal N} \right) \right] = 0.0914,~ {\rm StDev} \left[    {\rm Anom}_{\rm Liu} \left( {\mathcal N} \right)     \right] = .0178$, $ {\mathbbm E}  \left[ {\rm Anom}_{\rm Botts} \left( {\mathcal N} \right) \right] = 0,~ {\rm and}~{\rm StDev} \left[    {\rm Anom}_{\rm Botts} \left( {\mathcal N} \right)     \right] = 1.$      ${\rm Anom}_{\rm Liu} \left( {\mathcal N}^{\mathcal P} \right) = 0.706,~{\rm and}~{\rm Anom}_{\rm Botts} \left(  {\mathcal N} \right) = -17.616.$ These numbers suggest abnormal behavior, and the picture illustrates abnormal behavior off the coast of New Jersey.}

\label{fig:Example3}
\end{center}
\end{figure}
}

{
\begin{figure}[H] 
\begin{center}
{\includegraphics[trim = 6cm 8cm 5cm 8cm, scale = 0.25]{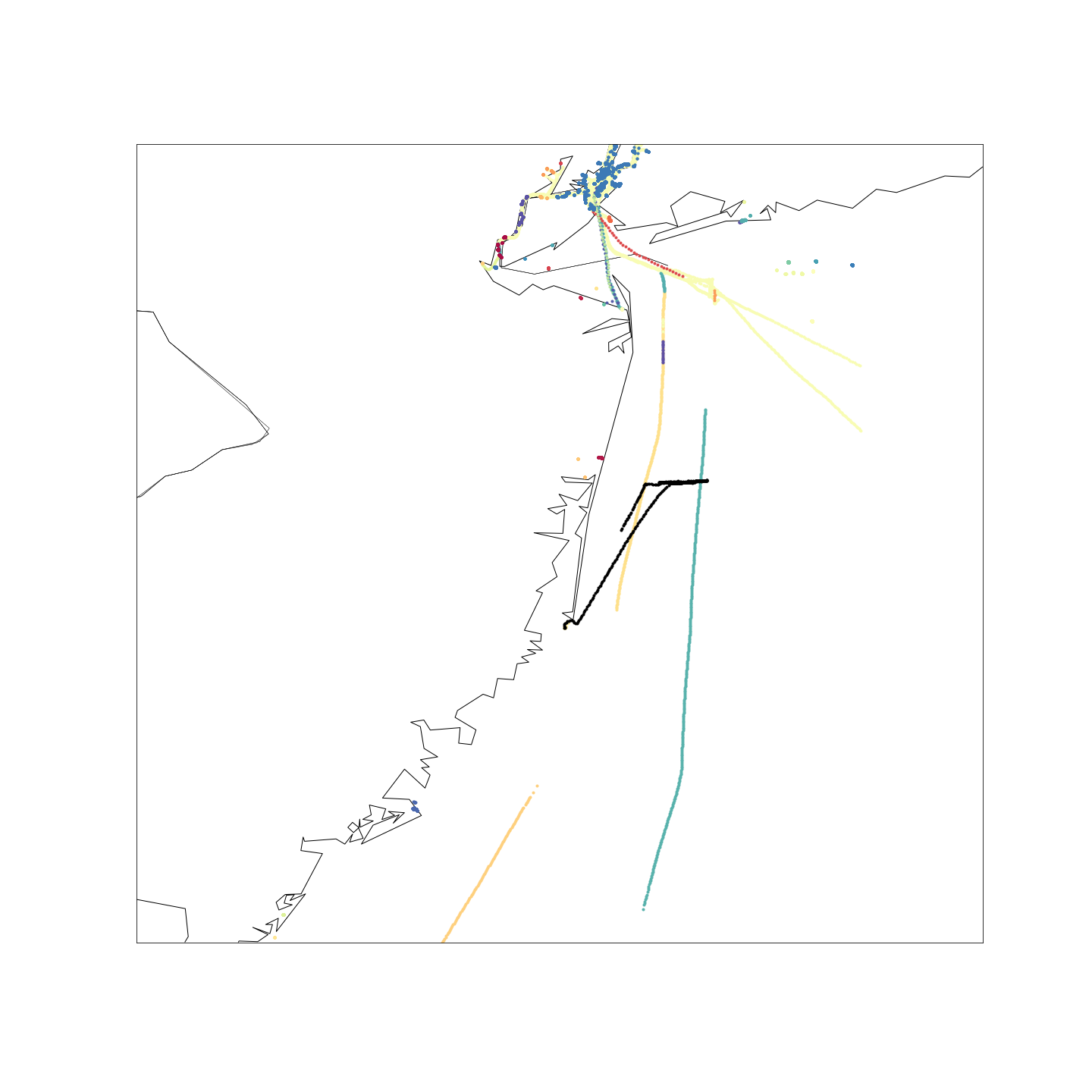}}

\caption{The values of ${\mathcal N}$ are in black, $m^{\rm st} = 39$, and $m^{\rm mv} = 379.$ $ {\mathbbm E}  \left[ {\rm Anom}_{\rm Liu} \left( {\mathcal N} \right) \right] = 0.0931,~ {\rm StDev} \left[    {\rm Anom}_{\rm Liu} \left( {\mathcal N} \right)     \right] = .0149$, $ {\mathbbm E}  \left[ {\rm Anom}_{\rm Botts} \left( {\mathcal N} \right) \right] = 0,~{\rm and}~ {\rm StDev} \left[    {\rm Anom}_{\rm Botts} \left( {\mathcal N} \right)     \right] = 1.$      ${\rm Anom}_{\rm Liu} \left( {\mathcal N} \right) = 0.770,~{\rm and}~{\rm Anom}_{\rm Botts} \left(  {\mathcal N} \right) = -23.776.$ These numbers suggest abnormal behavior, and the picture illustrates abnormal behavior off the coast of New Jersey.}

\label{fig:Example4}
\end{center}
\end{figure}
}

{
\begin{figure}[H] 
\begin{center}
{\includegraphics[trim = 7cm 7cm 7cm 7cm, scale = 0.25]{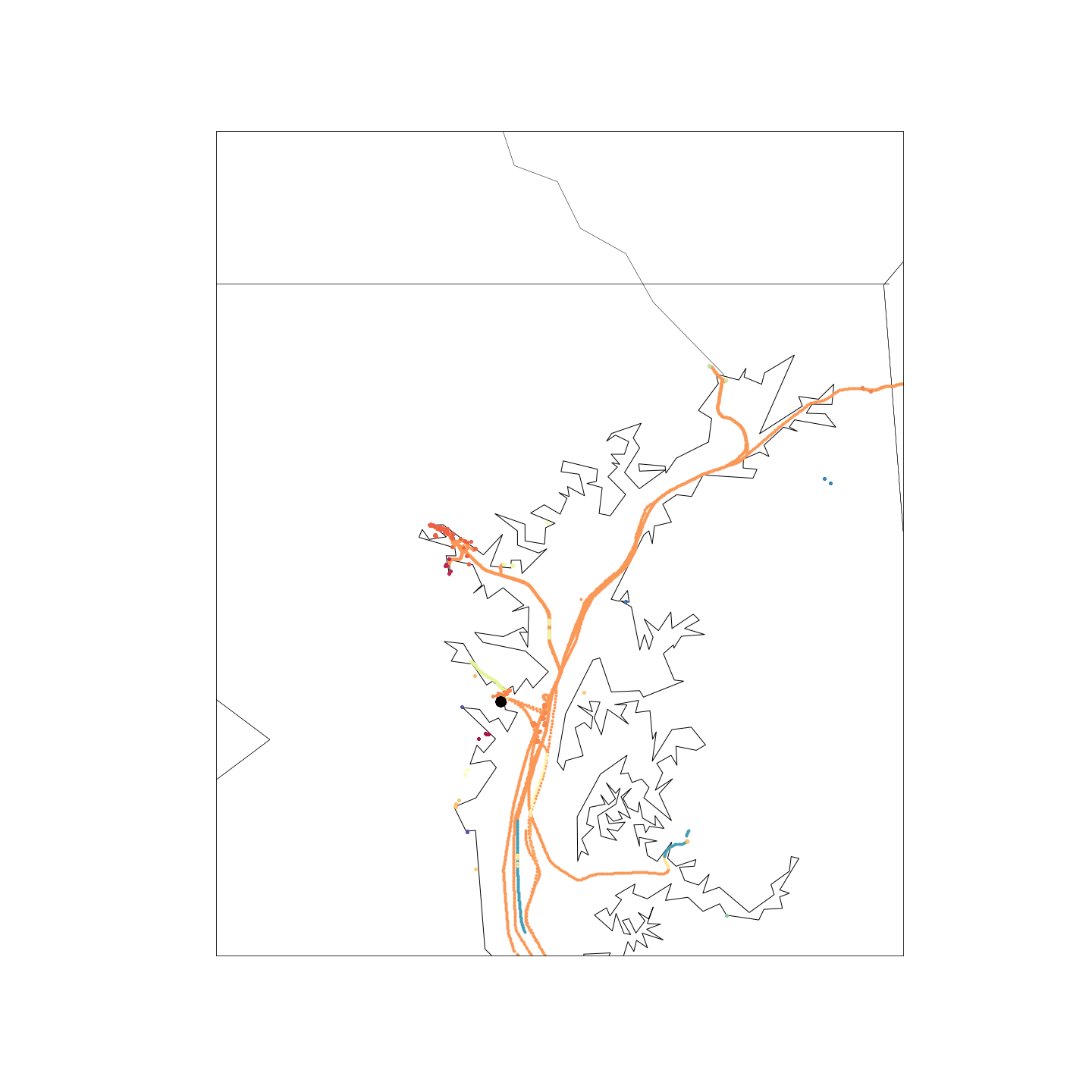}}

\caption{The values of ${\mathcal N}$ are in black, $m^{\rm st} = 34$, and $m^{\rm mv} = 0.$ $ {\mathbbm E}  \left[ {\rm Anom}_{\rm Liu} \left( {\mathcal N} \right) \right] = 0.05,  {\rm StDev} \left[    {\rm Anom}_{\rm Liu} \left( {\mathcal N} \right)     \right] = .0374$, $ {\mathbbm E}  \left[ {\rm Anom}_{\rm Botts} \left( {\mathcal N} \right) \right] = 0,~{\rm and}~ {\rm StDev} \left[    {\rm Anom}_{\rm Botts} \left( {\mathcal N} \right)     \right] = 1.$      ${\rm Anom}_{\rm Liu} \left( {\mathcal N} \right) = 0.0,~{\rm and}~{\rm Anom}_{\rm Botts} \left(  {\mathcal N} \right) = -8.191.$    ${\rm Anom}_{\rm Liu}$ suggests no abnormal behavior, yet ${\rm Anom}_{\rm Botts}$ does. In this case, the stationary points in ${\mathcal N}$ are just below the threshold used in calculating ${\rm Anom}_{\rm Liu}$.}

\label{fig:Example5}
\end{center}
\end{figure}
}

{
\begin{figure}[H] 
\begin{center}
{\includegraphics[trim = 7cm 7cm 7cm 7cm, scale = 0.25]{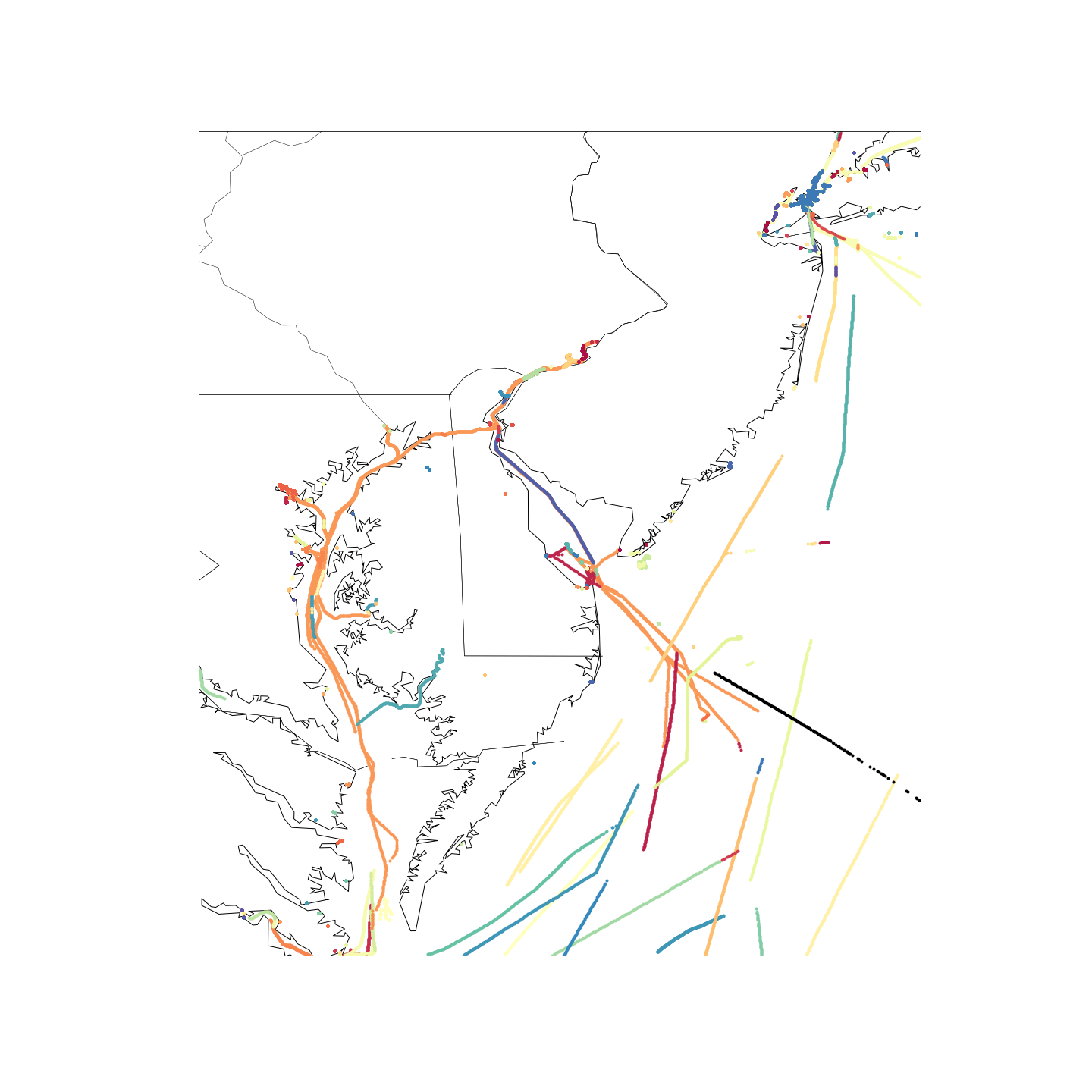}}

\caption{The values of ${\mathcal N}$ are in black, $m^{\rm st} = 0$, and $m^{\rm mv} = 160.$ $ {\mathbbm E}  \left[ {\rm Anom}_{\rm Liu} \left( {\mathcal N} \right) \right] = 0.0975, {\rm StDev} \left[    {\rm Anom}_{\rm Liu} \left( {\mathcal N} \right)     \right] = .0247$, $ {\mathbbm E}  \left[ {\rm Anom}_{\rm Botts} \left( {\mathcal N} \right) \right] = 0,~{\rm and}~ {\rm StDev} \left[    {\rm Anom}_{\rm Botts} \left( {\mathcal N} \right)     \right] = 1.$      ${\rm Anom}_{\rm Liu} \left( {\mathcal N} \right) = 0.063,~{\rm and}~{\rm Anom}_{\rm Botts} \left(  {\mathcal N} \right) = -3.640.$ In this case, ${\rm Anom}_{\rm Liu}$ suggests no abnormal behavior, yet ${\rm Anom}_{\rm Botts}$ does. Nearly all of the moving points  in ${\mathcal N}$ have $RDD$ values just below the threshold used in calculating ${\rm Anom}_{\rm Liu}$}

\label{fig:Example6}
\end{center}
\end{figure}
}


\appendix

\begin{center}

\begin{Large}

{\bf Appendix}

\end{Large}

\end{center}

\section{Proof of Theorems}\label{sctn:Proofs} 

\begin{thm} \label{thm:LiuExpectation}  Consider six sets of random variables, ${\mathcal V}^{\mathcal D},$ ${\mathcal V}^{\mathcal N},$ ${\mathcal Y}^{\mathcal D},$ ${\mathcal Y}^{\mathcal N},$ ${\mathcal Z}^{\mathcal D}$, and ${\mathcal Z}^{\mathcal N}$ where \begin{eqnarray*}    {\mathcal V}^{\mathcal D} & = & \left \{ V_1^{\mathcal D}, V_2^{\mathcal D}, \ldots, V_{r^{\rm st}}^{\mathcal D} \right \} ~~~~~~~{\mathcal V}^{\mathcal N}  =  \left \{ V_1^{\mathcal N}, V_2^{\mathcal N}, \ldots, V_{m^{\rm st}}^{\mathcal N} \right \} \\ {\mathcal Y}^{\mathcal D} & = & \left \{ Y_1^{\mathcal D}, Y_2^{\mathcal D}, \ldots, Y_{r^{\rm mv}}^{\mathcal D} \right \} ~~~~~~~ {\mathcal Y}^{\mathcal N}  =  \left \{ Y_1^{\mathcal N}, Y_2^{\mathcal N}, \ldots, Y_{m^{\rm mv}}^{\mathcal N} \right \}\\ {\mathcal Z}^{\mathcal D} & = & \left \{ Z_1^{\mathcal D}, Z_2^{\mathcal D}, \ldots, Z_{r^{\rm mv}}^{\mathcal D} \right \}~~~~~~~{\mathcal Z}^{\mathcal N} = \left \{ Z_1^{\mathcal D}, Z_2^{\mathcal D}, \ldots, Z_{r^{\rm mv}}^{\mathcal D} \right \}. \end{eqnarray*}   Assume these sets of random variables are independent of one another and also assume that the variables within each set are independent of another.    We will also assume that all values in ${\mathcal V}^{\mathcal D}$ and ${\mathcal V}^{\mathcal N}$ follow the common distribution $f_V(v)$, all the values in ${\mathcal Y}^{\mathcal D}$ and ${\mathcal Y}^{\mathcal N}$ follow the common distribution $f_Y(y)$, and all the values in ${\mathcal Z}^{\mathcal D}$ and ${\mathcal Z}^{\mathcal N}$ follow the common distribution $f_Z(z)$.     With these definitions, consider the statistic $T_1(\alpha)$, where: \footnote{In this theorem, $V_1^{\mathcal D}, V_2^{\mathcal D}, \ldots, V_{r^{\rm st}}^{\mathcal D}$ represent the $r^{\rm st}$ values of ${\mathcal ADD}^{\mathcal D}$,  $V_1^{\mathcal N}, V_2^{\mathcal N}, \ldots, V_{m^{\rm st}}^{\mathcal N}$ represent the $m^{\rm st}$ values of ADD in trajectory ${\mathcal N}, $  $Y_1^{\mathcal D}, Y_2^{\mathcal D}, \ldots, Y_{r^{\rm st}}^{\mathcal D}$ represent the $r^{\rm mv}$ values of ${\mathcal RDD}^{\mathcal D}$,  $Y_1^{\mathcal N}, Y_2^{\mathcal N}, \ldots, Y_{m^{\rm mv}}^{\mathcal N}$ represent the $m^{\rm mv}$ values of RDD in trajectory ${\mathcal N}, $   $Z_1^{\mathcal D}, Z_2^{\mathcal D}, \ldots, Z_{r^{\rm mv}}^{\mathcal D}$ represent the $r^{\rm mv}$ values of ${\mathcal CDD}^{\mathcal D}$,  and $Z_1^{\mathcal N}, Z_2^{\mathcal N}, \ldots, Z_{m^{\rm mv}}^{\mathcal N}$ represent the $m^{\rm mv}$ values of CDD in trajectory ${\mathcal N}. $   The statistic $T_1$ takes the form of ${\rm Anom}_{\rm Liu} \left( {\mathcal N} \right).$}     \begin{eqnarray*}   T_1(\alpha) & = & {\frac{1}{m^{\rm Tot}}}  \left \{   \sum_{j=1}^{m^{\rm st}}  {\mathbbm 1} \left( V_j^{\mathcal N} \geq {\hat {Q}}_{V, r^{\rm st}}(1-\alpha) \right)       + \sum_{j=1}^{m^{\rm mv}} {\mathbbm 1}  \left( Y_j^{\mathcal N} \geq {\hat {Q}}_{Y, r^{\rm mv}}(1 - \alpha)  ~{\rm or}~ Z_j^{\mathcal N} \leq {\hat {Q}}_{Z, r^{\rm mv}} ( \alpha ) \right)  \right \}           \end{eqnarray*}  where  $m^{\rm Tot} = m^{\rm st} + m^{\rm mv},$ and ${\hat Q}_{V, r^{\rm st}}(\gamma)$ is the $\gamma^{\rm th}$ percentile of ${\mathcal V}^{\mathcal D}$.    Then  \begin{eqnarray*}  {\mathbbm E} \left[ T_1  \left( \alpha \right) \right]  & \longrightarrow &  \alpha + \left. \left[ m^{\rm mv} \right/  \left( m^{\rm st} + m^{\rm mv} \right) \right] \cdot  \left( \alpha - \alpha^2 \right) ~{\rm as}~ \min \left( r^{\rm mv}, r^{\rm st} \right) \longrightarrow \infty,~~~{\rm and} \\ {\rm Var} \left[ T_1 (\alpha) \right] & \longrightarrow & \left( m^{\rm Tot} \right)^{-2} \left[  m^{\rm st}   \alpha (1 - \alpha ) + 2  m^{\rm mv}  \alpha  (1 - \alpha) +   m^{\rm mv}  \alpha^2  (1- \alpha^2) \right],~{\rm as}~ \min \left( r^{\rm st}, r^{\rm mv} \right) \longrightarrow \infty. \end{eqnarray*}    \end{thm}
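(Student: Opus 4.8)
The plan is to reduce the three data-dependent thresholds $\hat{Q}_{V,r^{\rm st}}(1-\alpha)$, $\hat{Q}_{Y,r^{\rm mv}}(1-\alpha)$, $\hat{Q}_{Z,r^{\rm mv}}(\alpha)$ to the corresponding population quantiles $Q_V(1-\alpha)$, $Q_Y(1-\alpha)$, $Q_Z(\alpha)$ of $f_V,f_Y,f_Z$, after which $T_1(\alpha)$ becomes an average of genuinely independent $\{0,1\}$-valued summands whose first two moments can be written down directly. Concretely, I would first condition on the set $\mathcal D$ (which, being independent of $\mathcal N$, leaves the laws of $V_j^{\mathcal N},Y_j^{\mathcal N},Z_j^{\mathcal N}$ untouched); since the sample $\gamma$-quantile of $r$ i.i.d.\ draws is consistent for $Q(\gamma)$ whenever the underlying distribution is continuous there (Serfling, \cite{Serfling}), we get $\hat{Q}_{V,r^{\rm st}}(1-\alpha)\to Q_V(1-\alpha)$ in probability as $r^{\rm st}\to\infty$, and similarly for the $Y$- and $Z$-thresholds as $r^{\rm mv}\to\infty$.

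Next I would transfer this convergence to the indicator summands. Because $f_V$ is a density, $V_j^{\mathcal N}$ almost surely avoids $Q_V(1-\alpha)$, so $v\mapsto\mathbbm{1}(v\ge Q_V(1-\alpha))$ is a.s.\ continuous at $V_j^{\mathcal N}$; combining this with the quantile consistency and a continuous-mapping/Slutsky argument gives $\mathbbm{1}(V_j^{\mathcal N}\ge \hat{Q}_{V,r^{\rm st}}(1-\alpha))\to\mathbbm{1}(V_j^{\mathcal N}\ge Q_V(1-\alpha))$ in probability, and likewise for the moving-point indicator $\mathbbm{1}(Y_j^{\mathcal N}\ge \hat{Q}_{Y,r^{\rm mv}}(1-\alpha)\text{ or }Z_j^{\mathcal N}\le \hat{Q}_{Z,r^{\rm mv}}(\alpha))$. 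All of these random variables, and all of their pairwise products, are bounded by $1$, so bounded convergence upgrades convergence in probability to convergence of expectations; hence $\mathbbm{E}[T_1(\alpha)]$ and $\mathbbm{E}[T_1(\alpha)^2]$, and therefore $\mathrm{Var}[T_1(\alpha)]$, converge to the values obtained by replacing every $\hat{Q}$ with the matching population quantile.

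With population quantiles in place, write $A_j=\mathbbm{1}(V_j^{\mathcal N}\ge Q_V(1-\alpha))$ for $1\le j\le m^{\rm st}$ and, for $1\le k\le m^{\rm mv}$, set $M_k=\mathbbm{1}(Y_k^{\mathcal N}\ge Q_Y(1-\alpha))+\mathbbm{1}(Z_k^{\mathcal N}\le Q_Z(\alpha))-\mathbbm{1}(Y_k^{\mathcal N}\ge Q_Y(1-\alpha))\,\mathbbm{1}(Z_k^{\mathcal N}\le Q_Z(\alpha))$, so the limiting $T_1(\alpha)$ equals $(m^{\rm Tot})^{-1}\!\left(\sum_j A_j+\sum_k M_k\right)$. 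By definition of the quantiles $\mathbbm{E}[A_j]=\alpha$, and using independence of $Y_k^{\mathcal N}$ and $Z_k^{\mathcal N}$ with inclusion--exclusion, $\mathbbm{E}[M_k]=\alpha+\alpha-\alpha^2$; summing and dividing by $m^{\rm Tot}$ gives the stated limit of $\mathbbm{E}[T_1(\alpha)]$. For the variance, the sets $\mathcal V^{\mathcal N},\mathcal Y^{\mathcal N},\mathcal Z^{\mathcal N}$ are mutually independent with independent entries, so every cross-covariance among the $A_j$ and the $M_k$ vanishes and $\mathrm{Var}[T_1(\alpha)]\to(m^{\rm Tot})^{-2}\big(m^{\rm st}\,\mathrm{Var}(A_1)+m^{\rm mv}\,\mathrm{Var}(M_1)\big)$; since $A_1$ is Bernoulli$(\alpha)$ and $M_1\in\{0,1\}$ (so $\mathbbm{E}[M_1^2]=\mathbbm{E}[M_1]$), a short second-moment computation for $M_1$, organized by its three constituent events (an $RDD$ exceedance, a $CDD$ deficit, and their intersection), collects into the displayed terms $m^{\rm st}\alpha(1-\alpha)+2m^{\rm mv}\alpha(1-\alpha)+m^{\rm mv}\alpha^2(1-\alpha^2)$.

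The main obstacle is the limit-interchange step: one must justify swapping the data-dependent thresholds for population quantiles not merely inside first moments but inside the second moments that build the variance, including the mixed products $A_jA_{j'}$, $A_jM_k$, and $M_kM_{k'}$. The ingredients needed are (i) continuity of the three laws at their $(1-\alpha)$- and $\alpha$-quantiles, which the hypothesis that $f_V,f_Y,f_Z$ are densities supplies, making each indicator map a.s.\ continuous at the relevant $\mathcal N$-observation, and (ii) uniform integrability, which is free since every quantity in sight lies in $[0,1]$. A minor secondary point is the conditioning on $\mathcal D$: because $\mathcal D\perp\mathcal N$ one may argue conditionally on $\mathcal D$ and then take expectations, but it is worth stating explicitly so that the \emph{$\min(r^{\rm st},r^{\rm mv})\to\infty$} limit is applied to the intended randomness.
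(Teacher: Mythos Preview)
Your proposal is correct and follows essentially the same route as the paper: consistency of the sample quantiles (Serfling), convergence of the indicator summands, uniform integrability from boundedness to pass to moments, and then an inclusion--exclusion computation under independence. If anything, you are slightly more explicit than the paper about why the limit interchange also covers the second moments and cross-products needed for the variance.
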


\begin{proof} First observe that ${\hat Q}_{V, r^{\rm st}} \left( \gamma \right) \stackrel{p}{\longrightarrow} Q_V(\gamma)$ as $r^{\rm st} \longrightarrow \infty,$ where $Q_V(\gamma)$ is the number such that ${\mathbbm P} \left( V \leq Q_V(\gamma) \right) = \gamma.$  (see Serfling, \cite{Serfling}). This implies ${\mathbbm 1} \left( V_j^{\mathcal N} \geq {\hat Q}_{V, r^{\rm st}} (\gamma) \right) \stackrel{d}{\longrightarrow} {\mathbbm 1} \left( V_j^{\mathcal N} \geq Q_V(\gamma) \right)$ as $r^{\rm st} \longrightarrow \infty. $ Since $$\sup_{r^{\rm st} \geq 1} \left \{ {\mathbbm E} \left[ \left| {\mathbbm 1} \left( V_j^{\mathcal N} \geq {\hat Q}_{V, r^{\rm st}} \left( \gamma \right) \right) \right|^l \right] \right \} < \infty~~~{\rm for~any~}l, $$ ${\mathbbm 1} \left( V_j^{\mathcal N} \geq Q_{V, r^{\rm st}} (\gamma) \right)$ is uniformly integrable (see Athreya \& Lahiri \cite{Athreya}). With uniform integrability, we can apply expectations to get   \begin{eqnarray*}     {\mathbbm E} \left[ T_1 (\alpha) \right] & \longrightarrow & \left( m^{\rm Tot} \right)^{-1} \left \{  \sum_{j=1}^{m^{\rm st}} {\mathbbm E} \left[ {\mathbbm 1} \left( V_j \geq Q_V( 1 - \alpha ) \right) \right] +  \sum_{j=1}^{m^{\rm mv}}  {\mathbbm E} \left( {\mathbbm 1}  \left[ Y_j \geq Q_Y \left( 1 - \alpha  \right)~{\rm or}~Z_j \leq Q_Z \left( \alpha \right) \right) \right]  \right \} \\     & = & \left( m^{\rm Tot} \right)^{-1} \left \{  \sum_{j=1}^{m^{\rm st}} {\mathbbm P} \left[  V_j \geq Q_V( 1 - \alpha ) \right]  +  \sum_{j=1}^{m^{\rm mv}}  {\mathbbm P} \left[  \left( Y_j \geq Q_Y \left( 1 - \alpha  \right) \right)~{\rm or}~ \left( Z_j \leq Q_Z \left( \alpha \right) \right) \right]  \right \} \\    & = & \left( m^{\rm Tot} \right)^{-1} \left \{  m^{\rm st} \cdot \alpha   + m^{\rm mv} \cdot \left[ {\mathbb P} ( Y \geq Q_Y \left( 1 - \alpha \right) ) + {\mathbb P} \left( Z_j \leq Q_Z \left( \alpha \right) \right)    \right. \right.  \\ & & \left. \left. -   {\mathbb P} \left( Y \geq Q_Y( 1 - \alpha )~{\rm and} ~Z_j \leq Q_Z( \alpha ) \right) \right]   \right \}  \\ & = & \left( m^{\rm Tot} \right)^{-1} \left[  m^{\rm st} \cdot \alpha + m^{\rm mv} \cdot \left( \alpha + \alpha - \alpha^2 \right)   \right] \\ & = &  \alpha +  \left( \left. m^{\rm mv} \right/ \left( m^{\rm Tot} \right) \right) \cdot \left( \alpha - \alpha^2 \right)  \\   {\rm Var} \left[ T_1 (\alpha) \right] & \longrightarrow & \left( m^{\rm Tot} \right)^{-2} \left \{ \sum_{j=1}^{m^{\rm st}} {\rm Var} \left[ {\mathbbm 1} \left( V_j \geq Q_V (1 - \alpha) \right) \right] + \sum_{j=1}^{m^{\rm mv}} {\rm Var} \left[ {\mathbbm 1} \left( Y_j \geq Q_Y (1 - \alpha) ~{\rm or}~Z_j \leq Q_Z \left( \alpha \right)  \right) \right] \right \} \\ & = & \left( m^{\rm Tot} \right)^{-2} \left \{    m^{\rm st} \alpha (1 - \alpha) + m^{\rm mv} {\rm Var} \left[ {\mathbbm 1} \left( Y_j \geq Q_Y \left( 1 - \alpha \right) \right)   \right] + m^{\rm mv} {\rm Var} \left[ {\mathbbm 1} \left( Z_j \leq Q_Z(\alpha) \right) \right]  \right.  \\ & & \left.  + m^{\rm mv} {\rm Var} \left[  {\mathbbm 1} \left( Y_j \geq Q_Y(1-\alpha) ~{\rm and}~Z_j \leq Q_Z(\alpha) \right) \right]  \right \}   \\ & = & \left( m^{\rm Tot} \right)^{-2} \left[   m^{\rm st} \cdot \alpha \cdot (1 - \alpha) + 2 \cdot m^{\rm mv} \cdot \alpha \cdot (1-\alpha) + m^{\rm mv} \alpha^2 \cdot (1 - \alpha^2)  \right]  \end{eqnarray*}    \end{proof}

\begin{thm} \label{thm:BottsDistn}  With the same definitions of ${\mathcal V}^{\mathcal D},$ ${\mathcal V}^{\mathcal N}$, etc. established in Theorem \ref{thm:LiuExpectation}, consider the statistic $T_2$, where: \footnote{In this theorem $T_2$ takes the form of ${\rm Anom}_{\rm Botts} \left( {\mathcal N} \right).$}     $$  T_2  =    \left \{  \begin{array}{ll}  S_1  & m^{\rm mv} = 0 \\ S_2 & m^{\rm st} = 0 \\ \left( S_1 + S_2 \right)/\sqrt{2} & m^{\rm st} > 0~\& ~m^{\rm mv} > 0  \end{array}  \right.    ,   $$        \begin{eqnarray*}  S_1 & = & \left. \left \{ \left(  \left( m^{\rm st} \right)^{-1} \sum_{j=1}^{m^{\rm st}} \left( 1 - {\hat F}_{V,r^{\rm st}}(V_j^{\mathcal N}) \right) -  {\frac{1}{2}} \right) \right \} \right/ \sqrt{{\frac{1}{12 m^{\rm st}}}}  , \\ S_2 & = & \left. \left \{ \left[     \left( m^{\rm mv} \right)^{-1} \sum_{j=1}^{m^{\rm mv}} \min \left( \left( 1 - {\hat F}_{Y, r^{\rm mv}}(Y_j^{\mathcal N}) \right), {\hat F}_{Z, r^{\rm mv}}(Z_j^{\mathcal N}) \right)    \right] - {\frac{1}{3}} \right \} \right/ \sqrt{ {\frac{1}{18 m^{\rm mv}}}},   \end{eqnarray*} and ${\hat F}_{V, r^{\rm st}} \left( V_j^{\mathcal N} \right) = \left( r^{\rm st} \right)^{-1} \sum_{i=1}^{r^{\rm st}} {\mathbbm 1} \left( V_i^{\mathcal D} \leq V_j^{\mathcal N} \right).$    Then  $$ T_2   \stackrel{d}{\longrightarrow}  N(0,1) ~~{\rm as}~ {\tilde {m}}^{\mathcal N} ~\&~ \min \left( r^{\rm st}, r^{\rm mv} \right)  \longrightarrow \infty,$$ where   $${\tilde {m}}^{\mathcal N} = \left \{ \begin{array}{ll}  m^{\rm st} & {\rm if}~m^{\rm mv} = 0 \\ m^{\rm mv} & {\rm if}~m^{\rm st} = 0 \\   \min \left( m^{\rm st}, m^{\rm mv} \right) & {\rm if} ~m^{\rm st} > 0~\&~m^{\rm mv} > 0    \end{array} \right. .$$    \end{thm}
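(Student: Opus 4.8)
The plan is to handle the three branches of $T_2$ by establishing the asymptotic normality of $S_1$ and of $S_2$ separately and then combining them in the mixed branch. The common engine is to replace each empirical c.d.f.\ built from the ``$\mathcal D$'' sample ($\hat F_{V,r^{\rm st}}$, $\hat F_{Y,r^{\rm mv}}$, $\hat F_{Z,r^{\rm mv}}$) by the corresponding true c.d.f.\ ($F_V$, $F_Y$, $F_Z$), and then to invoke the probability integral transform so that the summands of $S_1$ and $S_2$ become i.i.d.\ functions of uniform random variables, to which the Lindeberg--L\'evy central limit theorem applies; the relevant first and second moments have already been recorded in the main text ($1/2$ and $1/12$ for a single uniform, $1/3$ and $1/18$ for the minimum of two independent uniforms).

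Concretely, I would write $S_1 = \tilde S_1 + R_1$, where $\tilde S_1$ is $S_1$ with $\hat F_{V,r^{\rm st}}$ replaced by $F_V$. The Dvoretzky--Kiefer--Wolfowitz inequality (\cite{DKW}) gives $\|\hat F_{V,r^{\rm st}} - F_V\|_\infty = O_p\!\left(1/\sqrt{r^{\rm st}}\right)$, so $|R_1| \le \sqrt{12\,m^{\rm st}}\,\|\hat F_{V,r^{\rm st}} - F_V\|_\infty = O_p\!\left(\sqrt{m^{\rm st}/r^{\rm st}}\right) = o_p(1)$ as long as $r^{\rm st}$ grows at least as fast as $m^{\rm st}$. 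Assuming $F_V$ continuous, $1 - F_V(V_j^{\mathcal N}) \sim {\rm Uniform}(0,1)$, i.i.d.\ across $j$, so the CLT gives $\tilde S_1 \stackrel{d}{\longrightarrow} N(0,1)$ and Slutsky's theorem closes the $m^{\rm mv} = 0$ branch.

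The same reduction applied to $S_2$ yields $S_2 = \tilde S_2 + R_2$ with $R_2 = o_p(1)$ by the analogous DKW bounds, where in $\tilde S_2$ the pair $\bigl(1 - F_Y(Y_j^{\mathcal N}),\, F_Z(Z_j^{\mathcal N})\bigr)$ consists of two independent ${\rm Uniform}(0,1)$ variables (using the assumed independence of $\mathcal Y^{\mathcal N}$ and $\mathcal Z^{\mathcal N}$, so that $\min$ has mean $1/3$ and variance $1/18$, matching the normalization of $S_2$); the CLT then gives $\tilde S_2 \stackrel{d}{\longrightarrow} N(0,1)$ and the $m^{\rm st} = 0$ branch. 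In the mixed branch, $S_1$ is a function of $\mathcal V^{\mathcal D}, \mathcal V^{\mathcal N}$ only and $S_2$ a function of $\mathcal Y^{\mathcal D}, \mathcal Y^{\mathcal N}, \mathcal Z^{\mathcal D}, \mathcal Z^{\mathcal N}$ only, and these families are mutually independent; hence $(S_1, S_2) \stackrel{d}{\longrightarrow} (Z_1, Z_2)$ with $Z_1, Z_2$ independent standard normals, and the continuous mapping theorem gives $(S_1 + S_2)/\sqrt2 \stackrel{d}{\longrightarrow} N(0,1)$.

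The step I expect to be the main obstacle is the control of $R_1$ and $R_2$: because every summand of $S_1$ uses the \emph{same} empirical c.d.f., the errors $\hat F_{V,r^{\rm st}}(V_j^{\mathcal N}) - F_V(V_j^{\mathcal N})$ are strongly dependent and cannot be averaged away term-by-term, which forces the crude bound $O_p(\sqrt{m^{\rm st}/r^{\rm st}})$ and hence an implicit requirement that $\min(r^{\rm st}, r^{\rm mv})$ grow no slower than $\tilde m^{\mathcal N}$ (alternatively, one may take the limits iteratively, sending each $r \to \infty$ before $\tilde m^{\mathcal N} \to \infty$ and using Glivenko--Cantelli at continuity points). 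A secondary caveat is that continuity of $F_V, F_Y, F_Z$ is needed for the probability integral transform to produce exact uniforms; if the $ADD$, $RDD$, or $CDD$ distributions carry atoms (e.g.\ $ADD = 0$ when a new point coincides with a sampling point), the limiting moments must be replaced by the moments of $1 - F(\cdot)$ under the true mixed law and the normalizations in $S_1, S_2$ adjusted accordingly.
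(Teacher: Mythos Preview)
Your proposal is correct and follows essentially the same route as the paper: invoke DKW to replace each empirical c.d.f.\ by the true one, apply the probability integral transform so that the summands become i.i.d.\ uniforms (respectively, minima of two independent uniforms) with the moments $1/2,\,1/12$ and $1/3,\,1/18$, then use the CLT and combine $S_1,S_2$ by independence in the mixed case. Your version is in fact more careful than the paper's, which treats the replacement step as an informal iterated limit (``$S_i \stackrel{d}{\to} S_i^*$ as $r\to\infty$'' and then $S_i^*\stackrel{d}{\to}N(0,1)$ as $m\to\infty$); your explicit $S_i=\tilde S_i+R_i$ decomposition with the sup-norm DKW bound and Slutsky, together with the rate caveat $m/r\to 0$ (or iterated limits) and the continuity requirement on $F_V,F_Y,F_Z$, are exactly the assumptions the paper uses tacitly.
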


\begin{proof} For calculating the asymptotic distribution of $T_2$, we first have to remember that from the Dvoretsky-Kiefer-Wolfowitz  inequality (\cite{DKW}), we get \begin{eqnarray}   \nonumber  {\mathbbm P} \left \{  \sup_{v \in {\mathbbm R}} \left|  {\hat F}_{V, r^{\rm st}} \left( v \right) - F_V(v)   \right| \geq \epsilon   \right \} & \leq & 2 \exp \left \{ -2 r^{\rm st} \epsilon^2  \right \}    \\ \label{eqn:ConvProb} \Longrightarrow   {\mathbbm P} \left \{   \left|  {\hat F}_{V, r^{\rm st}} \left( V_j^{\mathcal N} \right) - F_V \left( V_j^{\mathcal N} \right)   \right| \geq \epsilon   \right \} & \leq & 2 \exp \left \{ -2 r^{\rm st} \epsilon^2  \right \}.    \end{eqnarray} From Equation \ref{eqn:ConvProb} it follows that ${\hat F}_{V, r^{\rm st}} \left( V_j^{\mathcal N} \right) \stackrel{p}{\longrightarrow} F_V \left( V_j^{\mathcal N} \right)$ as $r^{\rm st} \longrightarrow \infty.$ Since convergence in probability implies convergence in distribution, we get $${\hat F}_{V, r^{\rm st}} \left( V_j^{\mathcal N} \right) \stackrel{d}{\longrightarrow} F_V \left( V_j^{\mathcal N} \right)~{\rm as}~r^{\rm st} \longrightarrow \infty.$$ The quantities $S_1$ and $S_2$ thus, respectively, converge in distribution to the values $S_1^*$ and $S_2^*$ as $r^{\rm st}, r^{\rm mv} \longrightarrow \infty,$ where  \begin{eqnarray*}  S_1^* & = & \left. \left \{ \left[  \left( m^{\rm st} \right)^{-1} \sum_{j=1}^{m^{\rm st}} \left( 1 - F_{V}(V_j^{\mathcal N}) \right) \right] -  {\frac{1}{2}}  \right \} \right/ \sqrt{{\frac{1}{12 m^{\rm st}}}}  ,~~{\rm and} \\ S_2^* & = & \left. \left \{ \left[     \left( m^{\rm mv} \right)^{-1} \sum_{j=1}^{m^{\rm mv}} \min \left( \left( 1 - F_{Y,}(Y_j^{\mathcal N}) \right), F_{Z}(Z_j^{\mathcal N}) \right)    \right] - {\frac{1}{3}} \right \} \right/ \sqrt{ {\frac{1}{18 m^{\rm mv}}}}   \end{eqnarray*}

  For calculating the asymptotic distributions of $S_1^*$ and $S_2^*$ as $m^{\rm st}$ \& $m^{\rm mv} \longrightarrow \infty$, it is important to remember that $1 - F_V \left( V_j^{\mathcal N} \right) \sim {\rm Unif}(0,1)$. Since the expected value and variance of a uniform random variable are ${\frac{1}{2}}$ and ${\frac{1}{12}}$, respectively, it follows (by the central limit theorem) that $S_1^* \stackrel{d}{\longrightarrow} N(0,1)$ as $m^{\rm st} \longrightarrow \infty.$ 

The random variables $\left( 1 - F_Y \left(Y_j^{\mathcal N} \right) \right)$ and $F_Z \left(Z_j^{\mathcal N} \right)$ also follow uniform distributions, making $A_j = \min \left[ \left(1 - F_Y \left( Y_j^{\mathcal N} \right) \right), F_Z \left( Z_j^{\mathcal N} \right) \right] \sim f_A(a),$ where $$f_A(a) = 2 (1 - a)~~~0 \leq a \leq 1.$$    The expected value and variance of $A$ are ${\frac{1}{3}}$ and ${\frac{1}{18}}$, respectively, implying (by the central limit theorem) that $S_2^* \stackrel{d}{\longrightarrow} N(0,1)$ as $m^{\rm mv} \longrightarrow \infty$.

With $S_1 \stackrel{d}{\longrightarrow} N(0,1)$ as $r^{\rm st}~\&~m^{\rm st} \longrightarrow \infty$,  $S_2 \stackrel{d}{\longrightarrow} N(0,1)$ as $r^{\rm mv}~\&~m^{\rm mv} \longrightarrow \infty$, it follows that $ \left. \left( S_1 + S_2 \right) \right/ \sqrt{2} \stackrel{d}{\longrightarrow} N(0,1)$  as ${\tilde m}^{\mathcal N}$ \& $\min(r^{\rm st},  r^{\rm mv}) \longrightarrow \infty$, making $T_2 \stackrel{d}{\longrightarrow} N(0,1)$ as ${\tilde m}^{\mathcal N}$ \& $\min(r^{\rm st},  r^{\rm mv}) \longrightarrow \infty$.  \end{proof}

\section{Simulations}\label{sctn:Simulations}

In this section, we perform simulations which illustrate the asymptotic result proven in Theorem \ref{thm:BottsDistn}. The histogram and Q-Q plot of 5000 values $T_2$ are provided in Figures \ref{fig:CLT_Demo1} - \ref{fig:CLT_Demo4}. The conditions of each simulation are provided in the caption.

{
\begin{figure}[H] 
\begin{center}
{\includegraphics[scale = 0.5]{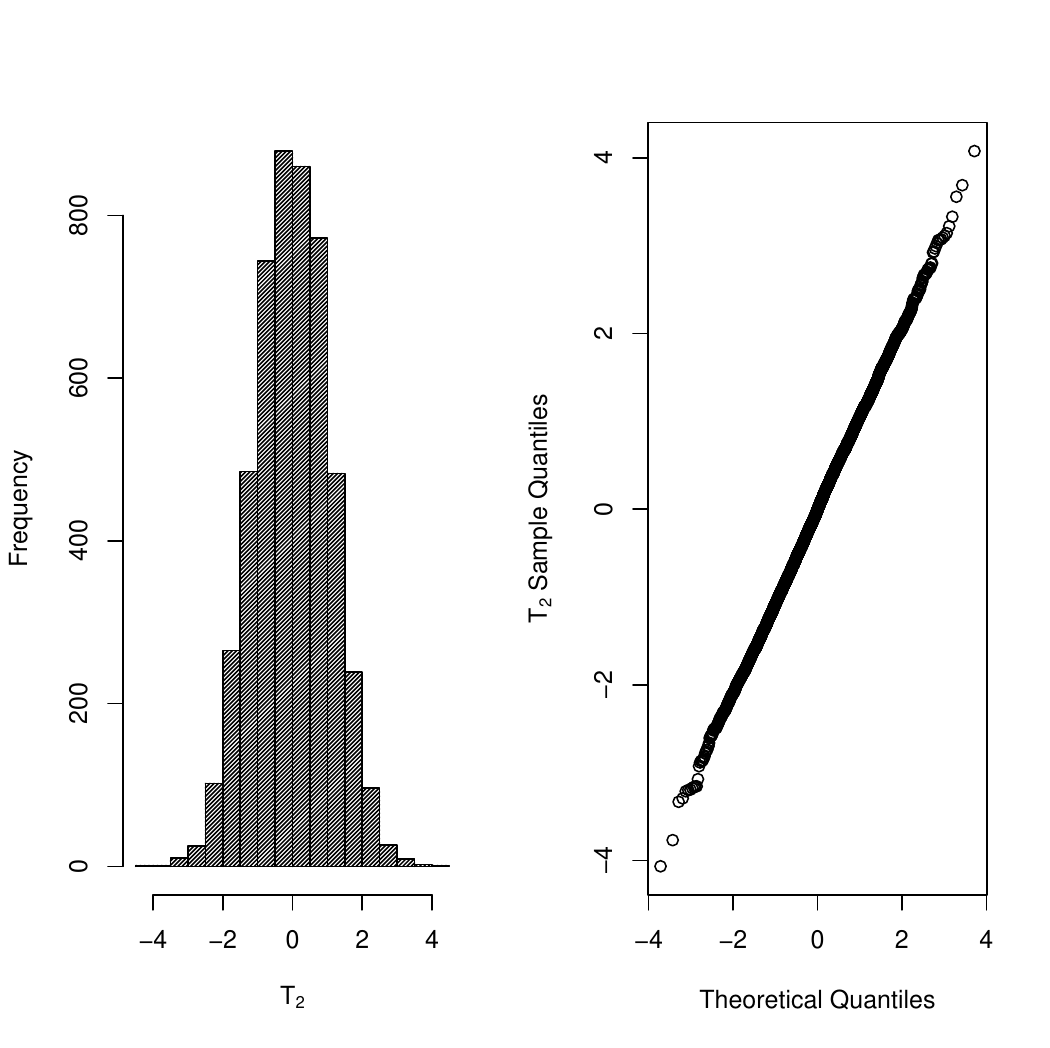}}

\caption{Histogram and Q-Q plot of 5000 values of $T_2.$   In this case, $r^{\rm st} = r^{\rm mv} = 1000$, $f_V(v)$ is an exponential distribution with $\lambda = 2$, $f_Y(y)$ is a gamma distribution with $\alpha = 2$ and $\beta = 4$, and $f_Z(z)$ is a chi-squared distribution with 8 degrees of freedom.   For this simulation, $m^{\rm st} = 100$ and $m^{\rm mv} = 200$. }

\label{fig:CLT_Demo1}
\end{center}
\end{figure}
}

{
\begin{figure}[H] 
\begin{center}
{\includegraphics[scale = 0.5]{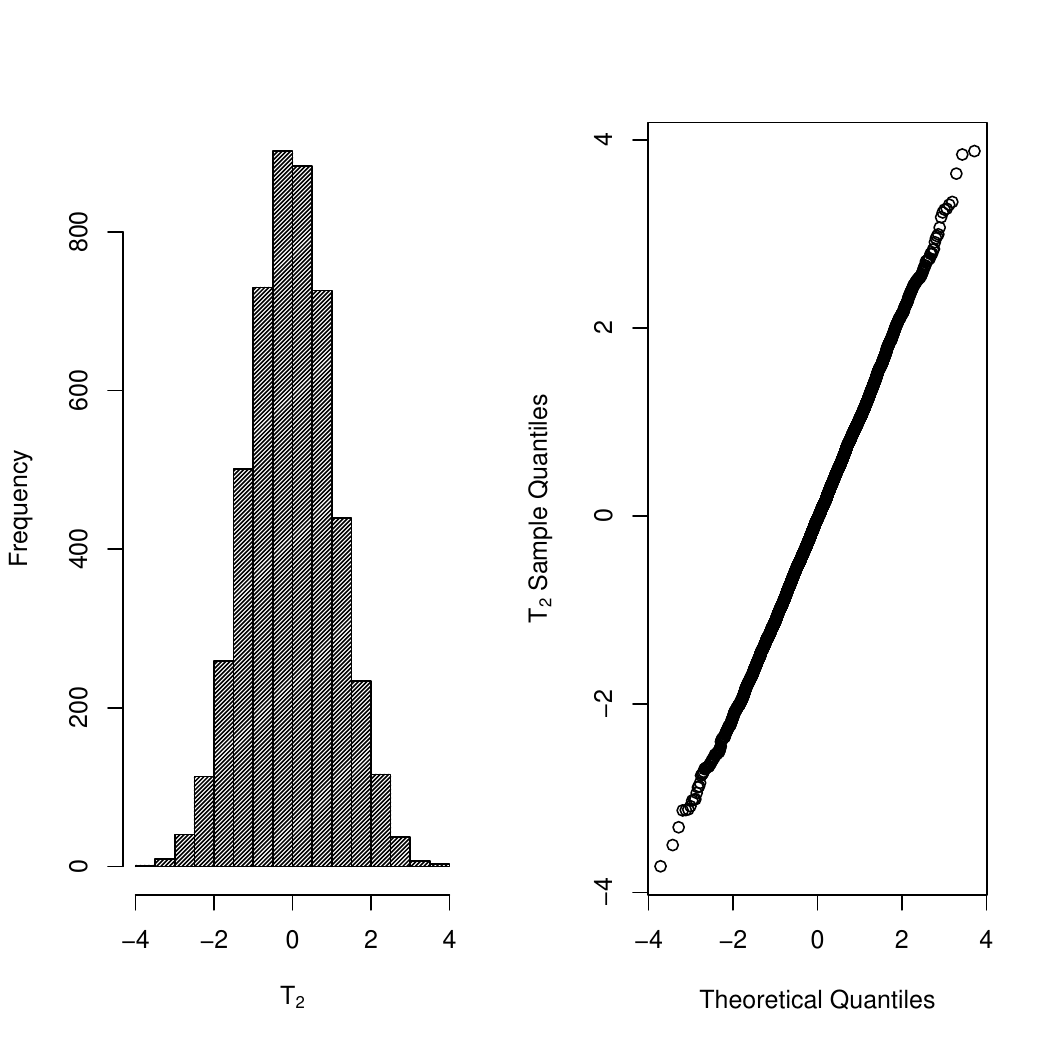}}

\caption{Histogram and Q-Q plot of 5000 values of $T_2.$   In this case, $r^{\rm st} = r^{\rm mv} = 1000$, $f_V(v)$ is a normal distribution with $\mu = 2$, and $\sigma = 4$, $f_Y(y)$ is a cauchy   distribution with $y_0 = 0$ and $\gamma = 1$, and  $f_Z(z)$ is an F distribution with  $d_1 = 8$ and $d_2 = 18$.   For this simulation, $m^{\rm st} = 200$ and $m^{\rm mv} = 100$. }

\label{fig:CLT_Demo2}
\end{center}
\end{figure}
}

{
\begin{figure}[H] 
\begin{center}
{\includegraphics[scale = 0.5]{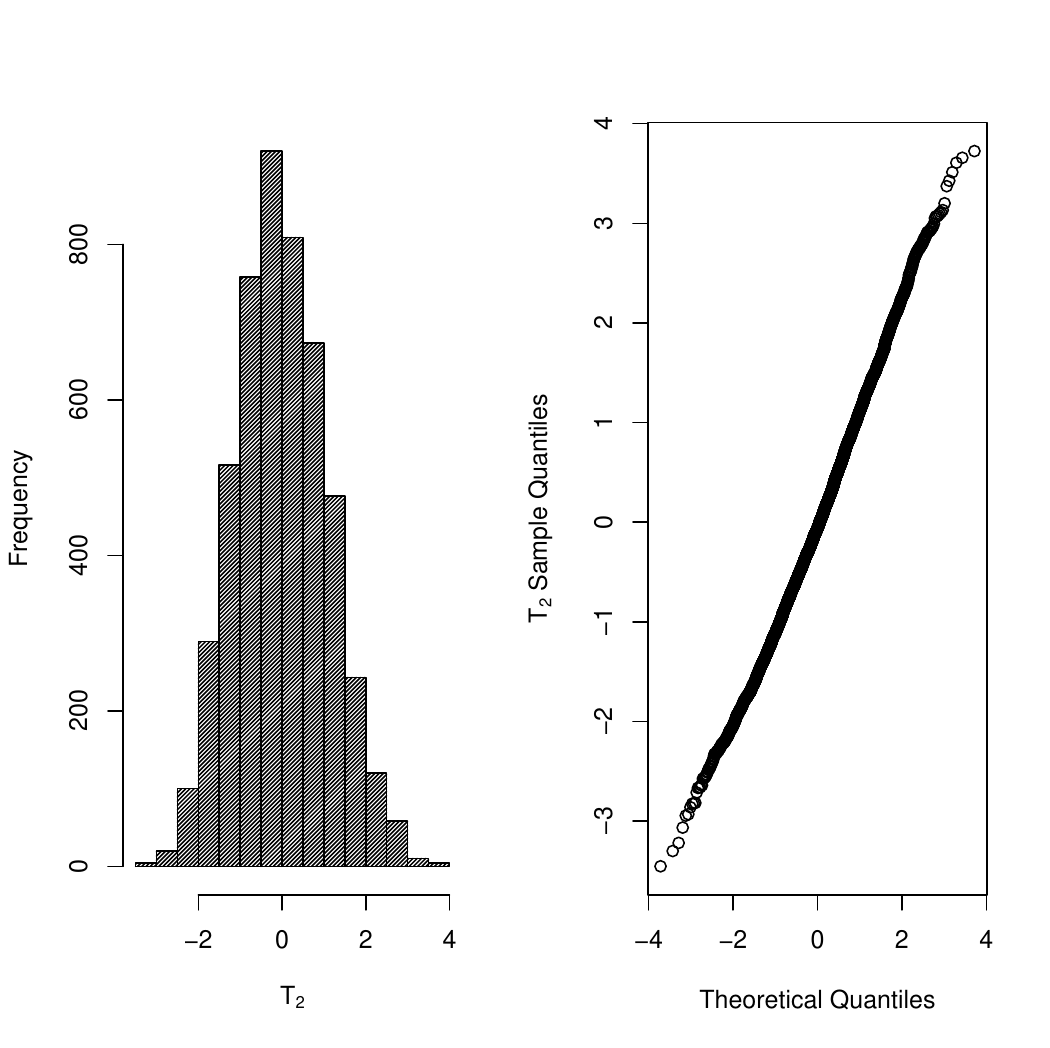}}

\caption{Histogram and Q-Q plot of 5000 values of $T_2.$   In this case, $r^{\rm st} = 1000$, $r^{\rm mv} = 1000$, and  $f_V(v)$ is a t distribution with $\nu = 2$.   For this simulation, $m^{\rm st} = 300$, and $m^{\rm mv} = 0.$ }

\label{fig:CLT_Demo3}
\end{center}
\end{figure}
}

{
\begin{figure}[H] 
\begin{center}
{\includegraphics[scale = 0.5]{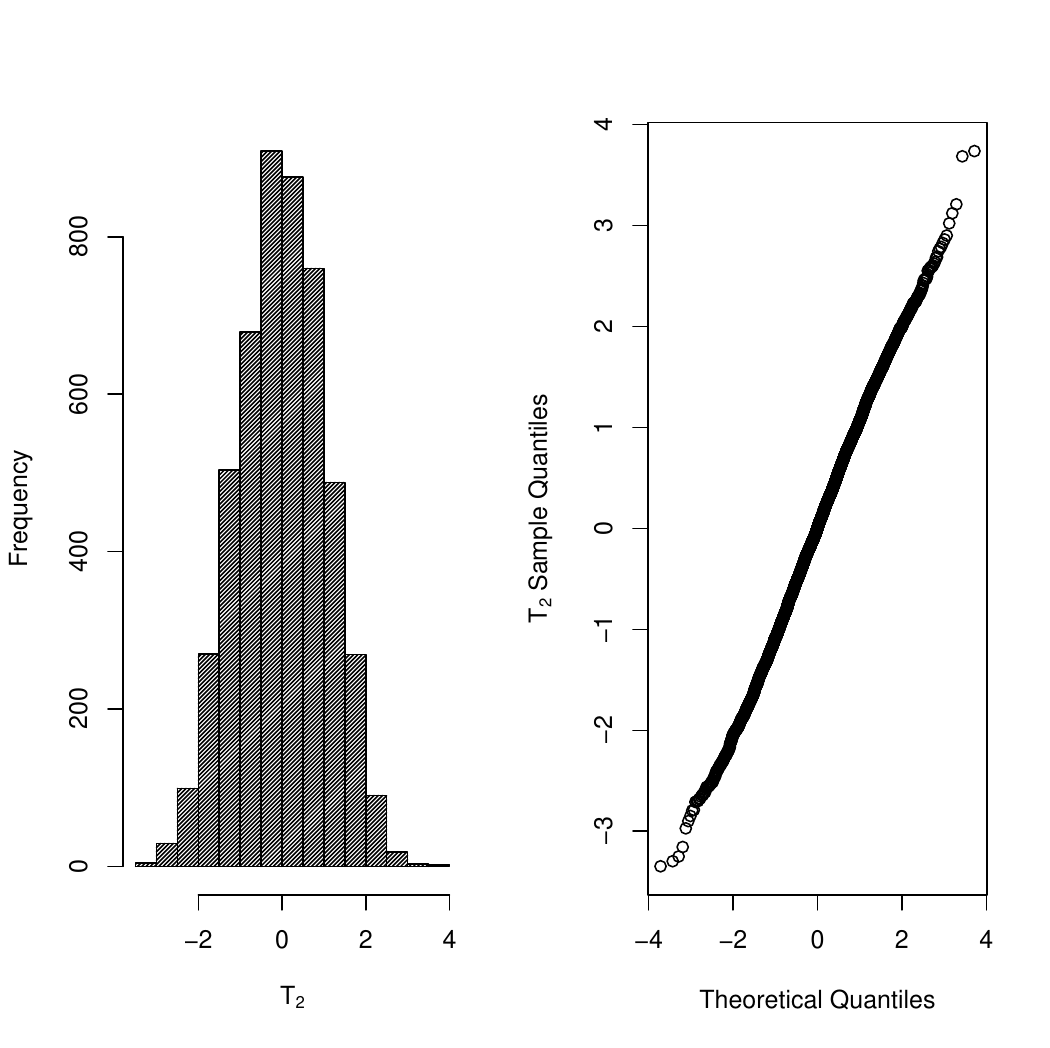}}

\caption{Histogram and Q-Q plot of 5000 values of $T_2.$   In this case, $r^{\rm st} = 1000$, $r^{\rm mv} = 1000$, $f_Y(y)$ is a cauchy   distribution with $y_0 = 4$ and $\gamma = 2$, and  $f_Z(z)$ is an F distribution with  $d_1 = 1$ and $d_2 = 20$.   For this simulation, $m^{\rm mv} = 200,$ and $m^{\rm st} = 0$. }

\label{fig:CLT_Demo4}
\end{center}
\end{figure}
}

\end{document}